\documentclass[acmsmall,nonacm,screen]{acmart}

\AtBeginDocument{%
  }

\usepackage{wrapfig}
\usepackage{graphicx}
\usepackage{float}
\usepackage{qcircuit}
\usepackage{pifont}
\usepackage{tikz}
\usepackage{algorithm}
\usepackage{algorithmicx}
\usepackage[noend]{algpseudocode}
\usepackage{subcaption}
\usepackage{listings}
\usepackage{xcolor}
\usepackage{makecell}
\usepackage{mathpartir}
\usepackage{booktabs}
\usepackage{multirow}
\begin{document}

\title{QisMC: A Model Checker for QISKIT Program Debugging}

\author{Aochu Dai}
\email{dac22@mails.tsinghua.edu.cn}
\orcid{0009-0005-9935-8823}
\affiliation{%
  \institution{Department of Computer Science and Technology, Tsinghua University}
  \city{Beijing}
  \country{China}
}

\author{Mingsheng Ying}
\affiliation{%
  \institution{Centre for Quantum Software and Information, University of Technology Sydney}
  \city{Sydney}
  \country{Australia}}
\email{Mingsheng.Ying@uts.edu.au}

\renewcommand{\shortauthors}{A. Dai and M. Ying}

\begin{abstract}
    We present QisMC, the first quantum model checker dedicated to debugging Qiskit programs. On the theoretical side, we introduce the notion of quantum-classical transition system and a quantum computation tree logic (qCTL) grounded in Birkhoff-von Neumann logic for modeling the behaviors and specifying the properties of Qiskit programs, respectively. On the implementation side, QisMC provides an end-to-end framework encompassing transition system generation, logical formula representation, and model checking algorithms, and it efficiently performs image computation based on decision diagrams. These methods and design principles give QisMC advantages over previous Qiskit program debuggers, including a unified property specification language, a fully automated and exhaustive verification process, and the ability to generate counterexamples. Extensive illustrative examples and benchmark evaluations demonstrate the practicality, efficiency, and scalability of QisMC in verifying realistic quantum programs.
\end{abstract}



\begin{CCSXML}
<ccs2012>
   <concept>
       <concept_id>10010520.10010521.10010542.10010550</concept_id>
       <concept_desc>Computer systems organization~Quantum computing</concept_desc>
       <concept_significance>500</concept_significance>
       </concept>
   <concept>
       <concept_id>10003752.10003790.10011192</concept_id>
       <concept_desc>Theory of computation~Verification by model checking</concept_desc>
       <concept_significance>300</concept_significance>
       </concept>
   <concept>
       <concept_id>10011007.10010940.10010992.10010998.10003791</concept_id>
       <concept_desc>Software and its engineering~Model checking</concept_desc>
       <concept_significance>300</concept_significance>
       </concept>
 </ccs2012>
\end{CCSXML}

\ccsdesc[500]{Computer systems organization~Quantum computing}
\ccsdesc[300]{Theory of computation~Verification by model checking}
\ccsdesc[300]{Software and its engineering~Model checking}

\keywords{Quantum program debugging, Quantum temporal logic}


\maketitle

\section{Introduction}
Recent advances in quantum computing~\cite{acharya_quantum_2025,sales2025experimental} are driving quantum programs beyond the conventional model of static circuits consisting of fixed sequences of unitary operations, toward richer forms of quantum-classical interaction. Mid-circuit measurements allow a program to acquire classical information during execution and use it to dynamically determine subsequent quantum operations through classical feedforward, conditional operations, and control flow. Programs exhibiting such behaviors are commonly referred to as dynamic quantum circuits/programs~\cite{corcoles2021exploiting,sequential}. Such dynamic behaviors have already emerged in several important quantum computing tasks. For example, quantum error correction repeatedly performs syndrome measurements and uses their outcomes for classical decoding and subsequent operations. Similar quantum-classical interactions are also prevalent in protocols such as quantum teleportation~\cite{pirandola2015advances}, LOCC~\cite{chitambar2014everything}, and distributed quantum computing~\cite{caleffi2024distributed}. Meanwhile, mainstream quantum programming frameworks such as Qiskit have begun to provide direct programming support for mid-circuit measurements, classical variables, and structured control flow, together with evolving support for their transpilation and execution on quantum hardware~\cite{qiskit2024}. Consequently, the behavior of modern quantum programs is determined not only by operations on quantum states, but also by classical information obtained from measurements and the different execution paths selected accordingly.

However, such dynamic quantum-classical interactions also pose new challenges for the verification and debugging of quantum programs. \textbf{First, specifying the intended behavior of a program is itself nontrivial.} For static quantum circuits, correctness can often be characterized by input-output relations or properties of quantum states at particular program locations. For programs involving branches and loops, however, correctness may additionally concern how quantum states evolve along execution paths—for example, whether a property is always preserved, whether a particular quantum state is eventually reachable, or whether one property continues to hold until another event occurs. \textbf{Second, when a program violates its intended property, merely reporting verification failure provides limited assistance for debugging.} Developers also need to understand along which execution the property is violated and which program locations and operations contribute to the violation. Existing techniques and tools based on quantum program testing~\cite{mutationtestQiskit,qucheck,Shahin2020Property}, runtime assertions~\cite{Li2020Projection, Liu2020Runtime, Liu2021systematic}, and static analysis~\cite{Huang2019statistical,LintQ} have demonstrated the ability to detect various classes of quantum program bugs~\cite{rovara2024frameworkdebuggingquantumprograms}. However, they typically focus on particular program points, bounded executions, or specific classes of program properties, and do not simultaneously provide exhaustive reasoning over complex control flow, temporal specification, and diagnostic information that explains the execution leading to a property violation.

These two requirements align closely with the core strengths of model checking in classical hardware and software verification. Model checking allows developers to specify expected system behaviors over different execution paths using temporal logic and automatically determine whether a given property holds through systematic exploration of the state space~\cite{baier2008principles}. More importantly, when a property is violated, a model checker can typically generate a counterexample execution that leads to the violation, allowing developers to trace a concrete execution path to locate and understand the error. This diagnostic capability has long made model checking an effective technique for debugging, as emphasized by Clarke, Emerson and Sifakis in their 2007 ACM A.M. Turing Award Lecture~\cite{Clarke2009lecture}. Model checking therefore provides a natural framework for addressing the two challenges in quantum programming discussed above: temporal logic specifies correctness requirements across program locations and execution paths, while counterexamples turn verification failures into concrete diagnostic information for debugging.

Prior work has extended model checking to quantum systems and developed several quantum temporal logics and verification frameworks~\cite{feng2013model,do2024symbolic,Gay2008QMC,ying2021model}. However, as discussed in Section~\ref{RelatedWork}, existing approaches provide limited support for integrating quantum-state properties, realistic program control flow, and effective debugging within a unified workflow. To address these limitations, we present QisMC, a model checker for debugging Qiskit programs based on model-checking techniques and Birkhoff-von Neumann quantum logic. QisMC extends the classical model-checking workflow of temporal specification, exhaustive verification, and counterexample-guided debugging to quantum programs that combine quantum operations with classical control flow. To this end, we define a new quantum temporal logic, qCTL, based on Birkhoff-von Neumann quantum logic, and develop an end-to-end workflow encompassing automatic model construction from Qiskit programs, property specification and verification, and counterexample-guided debugging. We next outline the two key design ideas underlying QisMC.  
\begin{itemize}\item \textbf{Specification}: 
For expressing more sophisticated properties of Qiskit programs, 
QisMC employs a carefully designed classical-quantum computation tree logic (qCTL) that incorporates  Boolean-valued classical formulas with quantum atomic propositions defined over Hilbert subspaces. Recall that in Birkhoff-von Neumann logic~\cite{birkhoff1975logic,Chiara2004Reasoning}, a quantum proposition corresponds to a Hilbert subspace of the system, and a quantum state $\rho$ satisfies a proposition $P$ if $\mathrm{tr}(P\rho) = 1$, where $\mathrm{tr}$ denotes the trace operation. This subspace-based formalism naturally captures many properties central to quantum computing, e.g. state equivalence, termination, reachability, and valid codeword spaces in quantum error correction. Indeed, such properties have already been used in prior work on quantum program debugging and testing~\cite{qucheck,huang2025,Li2020Projection}. In this work, they are extended to the \textit{temporal}  setting. As will be illustrated from our motivating example in Subsection \ref{MotExample}, qCTL formulas can effectively express practical temporal properties of quantum programs. 

\item \textbf{Debugging}: QisMC provides an end-to-end workflow for counterexample-guided debugging of Qiskit programs. Through its Python interface, users can construct the transition system, define quantum and classical propositions, and associate properties with program locations through inline annotations. QisMC then employs the bidirectional quantum fixed-point iterations introduced in Section~\ref{sec:bi-fixed} to determine the satisfaction of quantum propositions at different program locations. Based on these results, the qCTL model-checking problem is reduced to a conventional CTL problem and automatically encoded for NuSMV~\cite{nusmv}, which performs exhaustive model checking and generates a counterexample when the specification is violated. Finally, QisMC maps the counterexample from the transition-system level back to the original Qiskit program, producing a source-level execution path that leads to the property violation and provides concrete diagnostic information for debugging.  
\end{itemize}

\subsection{A Motivating Example}\label{MotExample}
To illustrate the application scenarios and verification goals of QisMC, and to introduce the theoretical and tool foundations developed in this paper, we present a small yet complete Qiskit program debugging process as a running example. The example includes a simple control-flow construct---a while loop---that is rarely covered by existing Qiskit verification tools mentioned above.
\begin{example}[RUS program in Qiskit]\label{eg:rus1}
    RUS (Repeat-Until-Success) circuits have been used to construct some special single-qubit unitary gates. Suppose a quantum programmer wishes to implement the gate $\frac{I+i\sqrt2 X}{\sqrt{3}}$ following the construction in~\cite{rus14}. He writes the Qiskit code shown in Fig.~\ref{fig:rus_buggy_prog}, which produces the circuit in Fig.~\ref{fig:rus_buggy_circ} using the built-in circuit visualization tool.

    The program employs a \emph{while-loop} construct together with an ancilla qubit $q_2$, whose role is to initialize a classical register to the value 1 through measurement. After initializing the quantum state of $q_0$, the program enters the while loop. At the end of each iteration, $q_0$ is measured; if the outcome is 0, the loop terminates, thereby implementing the intended quantum gate.

    However, due to an oversight by the programmer, a bug appears in the loop body. When the  execution returns to the beginning of the loop, $q_0$ should be reset. Otherwise, the circuit no longer applies the intended operation repeatedly to the same quantum state.
\end{example}

\begin{figure}[htbp]
  \centering
  \begin{minipage}[b]{0.4\textwidth}
    \centering
    \includegraphics[width=\textwidth]{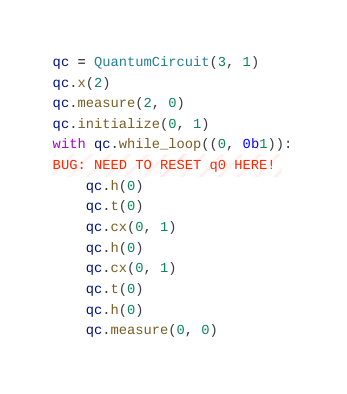}
    \subcaption{RUS program with bugs.}
    \label{fig:rus_buggy_prog}
  \end{minipage}
  \hfill
  \begin{minipage}[b]{0.5\textwidth}
    \centering
    \includegraphics[width=\textwidth]{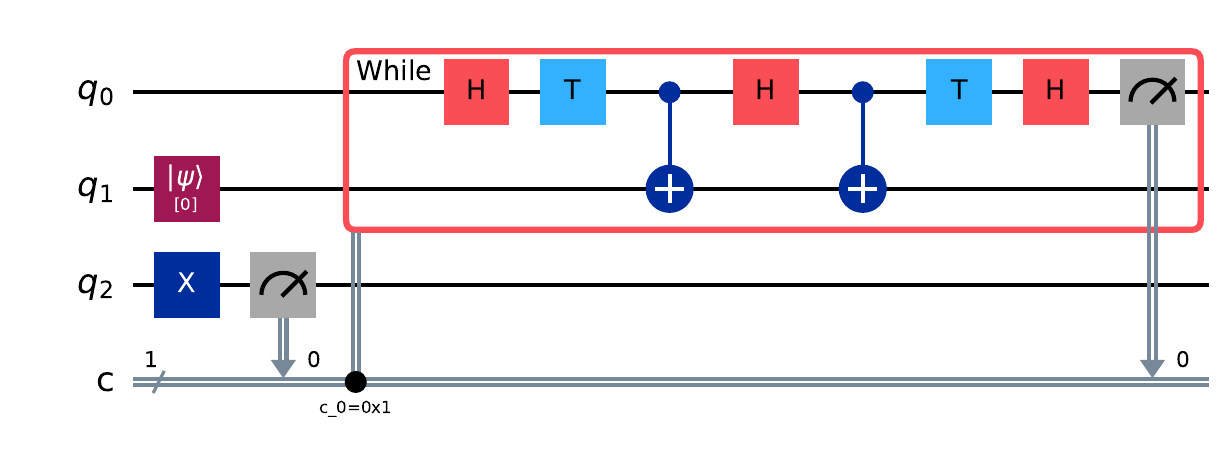}
    \subcaption{The quantum circuit of (a) represented by Qiskit.}
    \label{fig:rus_buggy_circ}
    \vspace{0.5em}
    \includegraphics[width=\textwidth]{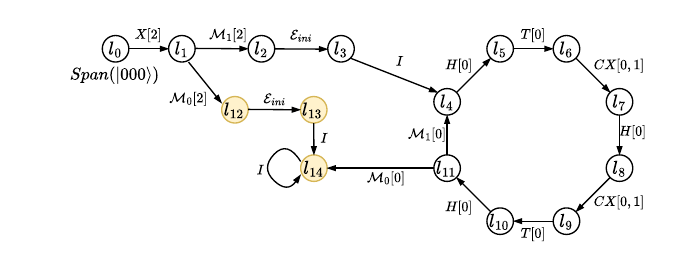}
    \subcaption{The transition system of the program, introduced in Section~\ref{sec3}}
    \label{fig:rus_buggy_qcts}
  \end{minipage}
  \caption{A buggy repeat-until-success program, circuit, and transition system to implement the $\frac{I+i\sqrt2 X}{\sqrt{3}}$ gate.}
  \Description{Repeat-until-success circuits aiming to implement the $\frac{I+i\sqrt2 X}{\sqrt{3}}$ gate.}
  \label{rusdebugging}
\end{figure}
In Qiskit programming, bugs may arise at various stages of program design~\cite{huangQDB2019,Leite2025testing}. Typical examples include incorrectly applied gates or measurements (e.g., due to a wrong operation or target qubit), faulty classical control flow, and improper initialization. Therefore, checking functional and temporal properties of quantum programs is essential.

In classical model checking, well-defined linear or branching-time temporal logics can express a rich class of properties, such as safety, liveness, fairness, and invariance. For Example~\ref{eg:rus1}, one may naturally consider properties such as \textbf{liveness---whether the program eventually exits the loop}---and \textbf{safety---whether the program terminates in the intended target quantum state}. In Section~\ref{sec:qctl-def}, we introduce a new temporal logic framework and employ a safety property to debug Example~\ref{eg:rus1}.

The \textbf{existing tools} for Qiskit program debugging (e.g. ~\cite{qucheck,rovara2024frameworkdebuggingquantumprograms, Li2020Projection}) typically capture the impact of bugs through assertions or the specification of program properties. However, conventional approaches provide limited support for expressing and checking how such assertions should hold over temporal executions, or require relatively involved manual designs to achieve similar reasoning. Moreover, they generally provide limited diagnostic information beyond whether a given assertion or property is satisfied. \textbf{In contrast}, QisMC directly supports the temporal property required in this example and, upon detecting a violation, produces a counterexample execution that traverses the initial location, the location where the bug is introduced (the entry of the \texttt{while} loop), and the location where its effect becomes observable (the loop exit). The counterexample can be further enriched with diagnostic information along the execution, such as the evolution of the quantum state, helping programmers understand how the bug propagates through the program and eventually leads to the property violation.

\subsection{Contributions and Outline}
In this work, we design and implement the complete model checking process of QisMC. The overall structure of QisMC's workflow is visualized in Fig.~\ref{fig:QisMCworkflow}. The areas highlighted in green there indicate our innovations. More explicitly, our main contributions can be summarized as follows:
\begin{itemize}
    \item \textbf{Advances in theoretical foundations.}
    We introduce a quantum computation tree logic (qCTL) for reasoning about quantum-classical transition systems. 
    The logic naturally supports classical control flow and variables, while enabling the verification of programs with infinitely many runtime quantum states through subspace-based quantum propositions. 
    Based on a bidirectional fixed-point iteration algorithm, we establish a reduction from qCTL model checking to a classical CTL model checking problem, which allows modern classical model checking techniques to be directly applied to the debugging of Qiskit programs.

    \item \textbf{A complete workflow for debugging Qiskit programs.}
    QisMC is the first quantum model checker dedicated to debugging Qiskit programs. 
    It supports a wide range of classical control-flow constructs in Qiskit, including classical variables, feedforward operations, and measurements. 
    Moreover, QisMC provides several capabilities that are absent from existing Qiskit property checkers~\cite{qucheck}, including flexible specification of temporal properties, automatic generation of counterexamples, and support for quantum while-loop constructs.

    \item \textbf{Symbolic reasoning for quantum propositions.}  
    We develop a practical framework for manipulating quantum propositions, including operations such as intersections, union, and complement of subspaces that arise in qCTL model checking. To support scalable implementation, QisMC employs a decision-diagram-based symbolic representation for quantum states and operations (implemented with CFLOBDDs~\cite{cflobdd} in the current version). To the best of our knowledge, this is the first work that explores decision-diagram-based symbolic representations in the context of quantum model checking.
\end{itemize}

The remainder of this paper is organized as follows. Section~\ref{qiskit-program} introduces the necessary background on Qiskit programming. Sections~\ref{sec3} and \ref{sec4} present the main theoretical foundations of QisMC---qCTL model checking. Sections~\ref{design} and \ref{experiments} describe the design, implementation, and experimental evaluation of QisMC. Related work is  discussed in Section \ref{RelatedWork}. Section~\ref{conclusion} concludes the paper.

\begin{figure}
    \centering
    \includegraphics[width=0.9\linewidth]{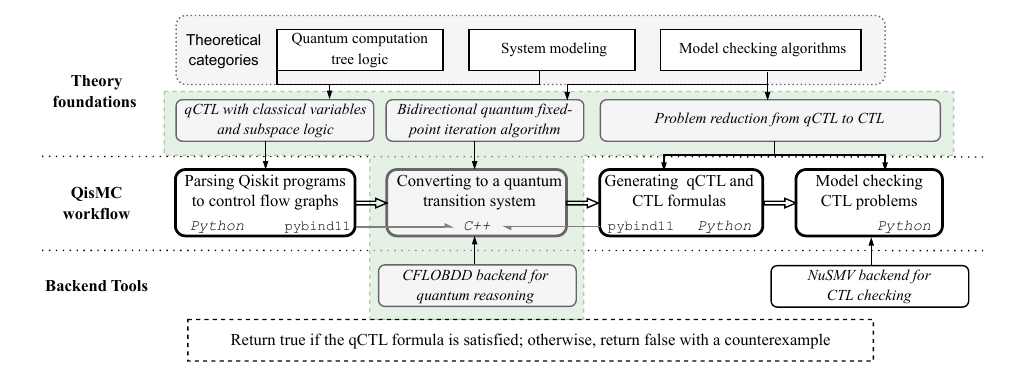}
    \caption{Overall structure of QisMC's workflow. 
    }
    \label{fig:QisMCworkflow}
    \Description{Overall structure of QisMC's theory, theory contributions, and workflow}
\end{figure}

\section{Related Work}\label{RelatedWork} 
In this section, we present several existing techniques and tools for quantum program debugging and testing along with quantum model checking. Some of the mentioned research is also surveyed in~\cite{Leite2025testing}. These approaches are categorized according to their respective domains, and the theoretical foundations of some representative tools are briefly discussed.

\subsection{Debugging and Testing of Quantum Programs}

\textit{Property-based testing} provides a structural framework for testing properties of quantum programs. Typical workflows—such as those in~\cite{qucheck,Shahin2020Property}—consist of property specification, test case generation, test execution, and statistical analysis. Similarly, MQT-debugger~\cite{rovara2024frameworkdebuggingquantumprograms} offers a user-friendly debugging framework that enables the manual insertion of assertions into programs, supporting the verification of entanglement, superposition, and equality properties.

\textit{Assertion in quantum program debugging} has received growing attention. Owing to the inherent complexities of quantum computation—such as its probabilistic nature and entanglement—the representation of properties and assertions is non-trivial. To address this challenge, a variety of assertion mechanisms have been investigated, including statistical assertions\cite{Huang2019statistical}, projection-based assertions\cite{Ying2022Assertion,Li2020Projection}, and dynamic or approximate assertions~\cite{Liu2021systematic,Liu2020Runtime}.

\textit{Mutation testing} evaluates the adequacy of quantum test suites by introducing mutation operators into programs to generate faulty variants. Several mutation testing tools have been developed for quantum programs, including~\cite{mutationtestQiskit,Daniel2022mutation,Muskit2021}.



\textit{Quantum Hoare logic.}
For quantum programs, there existed a set of mature Floyd-Hoare-style theories based on assertion logic, called quantum Hoare logic~\cite{qhl,fengClassical,zhou2019applied}, in which the mathematical properties of quantum states are specified by an assertion language with a certain limit of expressiveness. Practical quantum Hoare logic toolkits have been implemented in interactive proof assistants~\cite{coqq2023,Rand_2018}. Recently, research on quantum program verification has increasingly focused on practical and scalable problems, such as quantum error correction and fault-tolerant quantum computation~\cite{huang2025}.

\textit{Other formal verification methods} including~\cite{Chen2023Automata}, which supports quantum program verification based on automata-based framework, and~\cite{Chen2025,fang2024symbolic}, which are built on symbolic execution of quantum programs. We refer readers who are interested in quantum program debugging to the monograph~\cite{YINGFOUNDATION} or the survey~\cite{Leite2025testing} for further reading.

\subsection{Model Checking Quantum Systems}

The theory of quantum model checking has undergone nearly two decades of development. In early work~\cite{formal2002,Gay2006Probabilistic}, classical methods and tools of probabilistic model checking were first applied in the verification of quantum protocols. Later in the work of 2006~\cite{weakly2006}, an exogenous quantum propositional logic was introduced and subsequently extended to both branching-time and linear-time temporal logic model checking frameworks~\cite{temporal2009,Baltazar2008QCTL,towards2007}. To the best of our knowledge, this constitutes the earliest formulation of QLTL and QCTL.

Subsequent research focused on modeling quantum systems and temporal properties in a more theoretical and precise way. Quantum Markov chain for model checking was proposed in~\cite{Yu2012Reachability}, modeling a quantum program with loops. Based on this model, a range of techniques on different model checking problems were proposed, including reachability analysis~\cite{Yu2012Reachability,dai2024qreach}, reachability probability~\cite{reachabilityProb}, and LTL model checking~\cite{guan2024measurement}. An alternative but expressively equivalent model, called super-operator valued Markov chain, was proposed in~\cite{feng2013model}, which allowed verification of important probabilistic properties while maintaining a finite, discrete classical state space. Based on this model, a series of variations of temporal logic and applications of program verification have also been developed in~\cite{feng2017model,Xu2022qctl,feng2015qpmc}.

QMC~\cite{Gay2008QMC} was presented by Simon Gay et al., based on the early exogenous quantum propositional logic~\cite{towards2007, weakly2006, temporal2009, Baltazar2008QCTL}. It takes quantum protocols written in a custom concurrent modeling language and explores process interleavings, nondeterministic choices, and measurement outcomes to construct an execution tree. Quantum states are restricted to the stabilizer formalism in the system.
For specification, QMC adopts QCTL, a branching-time temporal logic obtained by extending CTL with dEQPL~\cite{weakly2006} state formulas. These formulas can characterize quantum states through amplitudes, measurement probabilities, and entanglement properties, while also referring to classical variables and measurement outcomes. However, the tool paper demonstrates only a small quantum coin-flipping example, whose reported properties mainly concern agreement of classical outcomes and a temporal safety condition over classical variables. Moreover, the paper does not report counterexample generation or dedicated debugging support.

QPMC~\cite{feng2015qpmc} is the first probabilistic model checker for quantum Markov chains, implementing the super-operator-valued quantum Markov chain model and the associated model checking algorithm proposed by Feng et al.~\cite{feng2013model}. This approach represents locations as classical states and quantum operations as super-operators labeling transitions, thereby avoiding an explicit continuous quantum state space. It further introduces a super-operator-valued measure over execution paths, allowing probabilistic temporal properties to be characterized uniformly for arbitrary initial quantum states. Built upon IscasMC, QPMC supports quantum programs with classical control flow, including loops and conditional branches. However, its atomic propositions remain purely classical and thus cannot directly express properties of quantum states. Moreover, the tool was demonstrated only on small-scale examples, leaving its scalability to larger quantum programs unclear.

QTC-Maude~\cite{do2024symbolic} provides a symbolic model-checking framework for quantum circuits based on Maude. Quantum operations and states are represented in Dirac notation and the process of quantum reasoning is encoded as rewrite rules. Therefore, the resulting transition system can be verified using Maude’s built-in classical LTL model checker. However, QTC-Maude does not introduce quantum-specific atomic propositions or a quantum temporal logic; quantum-state conditions must instead be manually encoded as Boolean predicates over symbolic states. Moreover, the presented circuit model does not provide control-flow constructs such as loops, and its evaluation is limited to the quantum teleportation case study, leaving its scalability on larger quantum programs unclear.


\section{Qiskit Programming}\label{qiskit-program} 






Qiskit is currently the most widely used open-source toolbox for quantum computing~\cite{qiskit2024}. It provides a full-stack solution for the implementation of quantum algorithms, including qubit mapping, circuit synthesis and optimization, and runtime environment deployment~\cite{qiskit2024}. As its core module, the \texttt{QuantumCircuit} class offers a rich set of functional operations, a lightweight structure, and flexible representations. Qiskit continues to extend the scope of quantum circuits in its updating versions, progressively incorporating support for dynamic quantum circuits, pulse-level operations, and classical expressions. 

\textbf{Syntax of Qiskit programs.} From a programming perspective, the instruction sequence that constitutes a \texttt{QuantumCircuit} can be interpreted as a form of quantum program~\cite{qiskit2024}. Building on this view, QisMC provides model checking capabilities for the commonly used syntactic constructs of \texttt{QuantumCircuit}. Since the proposed approach falls within the domain of formal verification, we present below a formal definition of the class of quantum programs over Qiskit’s \texttt{QuantumCircuit} that are supported by QisMC.
\begin{align*}
    S ::= \textbf{skip}&\mid q:=|0\rangle\mid \bar{q} := vec(\psi)\mid \bar{q} := U[\bar{q}]\mid S_1;S_2\mid x := \textbf{measure } q\\
    &\mid \textbf{if }x=c\textbf{ then } S_1\textbf{ else } S_2\textbf{ fi}\mid \textbf{while } x=c\textbf{ do }S\textbf{ od}
\end{align*}
Compared with the definitions of quantum programs with classical variables in prior work~\cite{fang2024symbolic,fengClassical}, the adopted program has the following difference: (1) We support the \texttt{initialize} construct in Qiskit, which allows for the initialization of an arbitrary quantum state, where $vec(\psi)$ denotes a vector over the complex field $\mathbb{C}$; (2) In the conditions of \texttt{if} and \texttt{while} statements, only the form $x=c$ is supported, where $x$ is a classical variable and $c$ is a constant of the type of $x$. This restriction arises from the limitations of Qiskit version. For the benchmark algorithms considered in this work, the required classical Boolean expressions can be conveniently simulated through the use of ancilla qubits together with additional measurements and control operations.


In Qiskit, each of the syntactic rules corresponds to a construct in the \texttt{QuantumCircuit} class. Quantum variables and (Boolean) classical variables are represented by \texttt{Qubit} and \texttt{Clbit}, respectively. Qiskit further supports the packaging of variables. For example, a \texttt{ClassicalRegister} instance consists of multiple \texttt{Clbit} objects, and can be used as a whole to serve as the conditional variable in an \texttt{if} statement. In Table~\ref{tab:qiskitsyntax}, we present some examples of the formal syntax of Qiskit programs alongside their corresponding Qiskit constructs, which are supported in QisMC.
{\begin{table*}
\footnotesize
  \caption{Syntax and constructs of Qiskit programs}
  \label{tab:qiskitsyntax}
  \begin{tabular}{cc}
    \toprule
    Formal syntax & Qiskit constructs\\
    \midrule
    $q:=|0\rangle, \bar{q} = vec(\psi)$ & \makecell{\texttt{qc.reset(q)}\\ \texttt{qc.initialize([a,b], [q])}}\\
    \midrule
    $\bar{q}:= U[\bar{q}]$ &  \makecell{\texttt{qc.h(q)}\quad\texttt{qc.cnot(q1,q2)}\\ \texttt{qc.append(Pauli(`-iX'), [q])}} \\
    \midrule
    $x := \textbf{measure } q$ & \texttt{qc.measure(q,x)}\\
    \midrule
    $\textbf{if }x=c\textbf{ then } S_1\textbf{ else } S_2$ & \makecell{\texttt{with if\_test((x,c)) as else\_:}...\\ \texttt{with else\_:}...}\\
    \midrule
    $\textbf{while } x=c\textbf{ do }S\textbf{ od}$ & {\texttt{with qc.while\_loop((x, c)):}...}\\
  \bottomrule
\end{tabular}
\end{table*}}

\textbf{Operational semantics} describe the execution sequences of computation steps of a Qiskit program~\cite{YINGFOUNDATION}. It also serves as the theoretical foundation for the analysis and verification of program behavior. Following prior work~\cite{fang2024symbolic,huang2025,fengClassical,ying2025practicalquantumhoarelogic} on the organization of classical and quantum states, we let $\sigma$ denote the values of the classical registers and $\rho$ the quantum states in the quantum registers. We define a triple $\langle S, \sigma, \rho\rangle$ as a quantum configuration. Then the operational semantics of Qiskit programs is defined in Fig.~\ref{fig:opsem} as a transition relation between configurations. 
{\begin{figure}[htbp]
  \centering
  \scriptsize
  {\setlength{\belowdisplayskip}{0pt}
  \setlength{\abovedisplayskip}{0pt}
  \begin{mathpar}
    \langle \textbf{skip}, \sigma, \rho \rangle \to \langle \downarrow, \sigma, \rho \rangle \quad\ \  
    
    \quad \langle \bar{q} := U[\bar{q}], \sigma, \rho \rangle \to \langle \downarrow, \sigma, U\rho U^\dagger\rangle \\

    \inferrule{\langle S_1,\sigma,\rho\rangle\to\langle S_1^\prime,\sigma^\prime,\rho^\prime\rangle}{\langle S_1;\ S_2, \sigma,\rho\rangle\to\langle S_1^\prime;\ S_2,\sigma^\prime,\rho^\prime\rangle}\quad 
    
    \langle q := |0\rangle, \sigma, \rho \rangle 
    \to \langle \downarrow, \sigma,  |0\rangle_{q} \langle 0|\rho|0\rangle_{q} \langle 0| + |0\rangle_{q} \langle 1|\rho|1\rangle_{q} \langle 0|  \rangle \\
    
    \langle \bar{q} := vec(\psi), \sigma, \rho \rangle 
    \to \langle \downarrow, \sigma,  |\psi\rangle_{\bar{q}} \langle 0|\rho|0\rangle_{\bar{q}} \langle \psi| + |\psi\rangle_{\bar{q}} \langle 1|\rho|1\rangle_{\bar{q}} \langle \psi|\rangle
    \and    
    \inferrule{M_0 = |0\rangle_q\langle 0|,\  M_1 = |1\rangle_q\langle 1|}{\langle x := \textbf{measure } q, \sigma, \rho \rangle 
    \to \langle \downarrow, \sigma[i/x], M_i\rho M_i^\dagger/\mathrm{tr}(M_i\rho M_i^\dagger)\rangle}\\
    
    \inferrule{\sigma(x) = c}{\langle \textbf{if }x=c\textbf{ then } S_1\textbf{ else } S_2, \sigma, \rho \rangle 
    \to \langle S_1, \sigma, \rho  \rangle}\quad  
    
    \inferrule{\sigma(x) \neq c}{\langle \textbf{if }x=c\textbf{ then } S_1\textbf{ else } S_2, \sigma, \rho \rangle 
    \to \langle S_2, \sigma, \rho  \rangle}\\

    \inferrule{\sigma(x) = c}{\langle \textbf{while } x=c\ \textbf{do } S\ \textbf{od}, \sigma, \rho \rangle 
  \to \langle S; \textbf{while } x=c\ \textbf{do } S\ \textbf{od}, \sigma, \rho \rangle}
  \and
  \inferrule{\sigma(x) \neq c}{\langle \textbf{while } x=c\ \textbf{do } S\ \textbf{od}, \sigma, \rho \rangle 
  \to \langle \downarrow, \sigma, \rho \rangle}

    
  \end{mathpar}}
  \caption{Operational semantics for Qiskit programs}
  \Description{Operational semantics for Qiskit programs}
  \label{fig:opsem}
\end{figure}}

{\vskip 3pt}

The \textbf{classical control flow} of quantum programs has been increasingly emphasized in the study of quantum programming~\cite{fengClassical,DENG202273,Selinger2006Towards,Chadha2006Reasoning} and sequential quantum circuit design~\cite{sequential,DDequivalence2022}, and its necessity for the development of more sophisticated quantum algorithms has been widely accepted. It should be noted that although we implement our tool on the Qiskit platform due to its industrial popularity and convenient toolchain support, the model checking framework proposed in this paper is not limited to Qiskit or the current syntax. In principle, it can be adapted to other quantum programming languages that support the control-flow constructs discussed in this section. On the other hand, how to transpile and deploy control-flow operations on physical quantum hardware remains an active topic in industry, and the corresponding software ecosystem is still evolving~\footnote{\href{https://quantum.cloud.ibm.com/docs/en/guides/classical-feedforward-and-control-flow}{https://quantum.cloud.ibm.com/docs/en/guides/classical-feedforward-and-control-flow}}.

\section{System Modeling and Temporal Logic}\label{sec3}
In quantum model checking, a fundamental challenge in system modeling and the definition of temporal logics for quantum programs arises from the fact that a finite control flow may correspond to an uncountably infinite set of quantum states. Several approaches have been proposed in the literature to address this difficulty.

One line of work restricts the programming model by avoiding complex quantum properties on control structures~\cite{temporal2009,do2024symbolic}, focusing only on unitary transformations and simple measurements. Another line of research develops graph-theoretic algorithms over quantum Markov chains~\cite{guan2024measurement}. While mathematically elegant, such models are often highly abstract and difficult to relate directly to concrete program instances. Both~\cite{feng2013model} and~\cite{ying2021model} adopt control-flow-based models for representing systems. However,~\cite{feng2013model} abstracts away the structural properties of quantum states and focuses solely on temporal properties defined over super-operators. In contrast,~\cite{ying2021model} does not address the issue of infinite state spaces, and the resulting framework is not compatible with classical variables.

The definitions of temporal logics also vary across existing work. In~\cite{temporal2009,do2024symbolic}, quantum properties are specified through numerical verification of amplitudes. The logic proposed in~\cite{feng2013model} does not capture properties of quantum states themselves, but instead focuses primarily on classical properties. In contrast,~\cite{ying2021model} introduces Birkhoff-von Neumann quantum logic into model checking, where the satisfaction of quantum properties is determined through subspace inclusion. This idea is adopted and further developed in our work.

Motivated by these limitations, we develop a new framework in the following sections. Our approach addresses the infinite quantum state space induced by program execution, while supporting the verification of both classical variables and quantum state properties. Furthermore, the use of subspace-based quantum logic provides a higher level of abstraction for specifying quantum properties, which helps improve the efficiency of verification.

\subsection{Quantum-classical Transition System}
Transition systems are the standard system model in model checking~\cite{baier2008principles}. 
However, quantum programs involve a continuum of quantum states and interact with classical control flow, which cannot be directly captured by classical transition systems. 
To address this, we introduce the notion of a quantum-classical transition system, where the system behavior is described by a finite set of locations, classical atomic propositions, and quantum transition relations.

\begin{definition}[Quantum-classical transition system]
    A quantum-classical transition system is a 7-tuple 
        $\mathcal{M} = (L, \mathcal{H}, \mathcal{T}, l_0, \text{AP}, f, \tilde{f}),$ where: 
    \begin{enumerate}
        \item $L$ is a finite set of locations, $\mathcal{H}$ is a finite dimensional Hilbert space, and $l_0\in L$ is a designated location, called the initial location.
        \item $\mathcal{T}\subseteq L\times \mathcal{SO}(\mathcal{H}) \times L$ is the quantum transition relation between locations, where $\mathcal{SO}(\mathcal{H})$ is the set of all super-operators (i.e. quantum operations or quantum channels, see Appendix~\ref{apd:prelim}) on $\mathcal{H}$. Each $\tau\in \mathcal{T}$ is a triple $\langle l, \mathcal{E}, l'\rangle$, which can be written as $\tau = l \stackrel{\mathcal{E}}\rightarrow l'$ where $l,l'\in L$ are the pre- and post-locations of $\tau$. 
        To comply with the rules of quantum mechanics, the transition relation should satisfy the condition: $\sum\{\!|\mathrm{tr}[\mathcal{E}(\rho)]:l \stackrel{\mathcal{E}}\rightarrow l' \in \mathcal{T}|\!\} \leq 1$. Here, $\{\!|\cdot|\!\}$ stands for a multi-set. 
        \item $\text{AP}$ is a finite set of classical atomic propositions, which represent Boolean-valued properties evaluated at locations.
        \item $f:L\rightarrow 2^{AP}$ assigns every location with a set of classical atomic propositions, and $\tilde{f}: L\rightarrow \mathrm{Fin}(\mathcal{P}(\mathcal{H}))$ assigns every location with a finite set of subspaces of $\mathcal{H}$. Here, $\mathrm{Fin}(X)$ denotes the set of finite subsets of $X$, and $\mathcal{P}(\mathcal{H})$ is the set of all  subspaces of $\mathcal{H}$.
    \end{enumerate}
\end{definition}

For each node $l$, $\text{succ}(l)$ and $\text{pre}(l)$ is the set of successor nodes and predecessor nodes of $l$. Following the convention of classical transition system~\cite{baier2008principles}, we apply a trivial transition $l \stackrel{\mathcal{I}}\rightarrow l$ to the locations which do not have a successor node, where $\mathcal{I}$ denotes the identity operator on $\mathcal{H}$. 

\begin{example}[Transition system of RUS program]\label{trans-prog}
Consider the Qiskit program in  Example~\ref{eg:rus1}. 
We can easily acquire a quantum-classical transition system (shown in Fig.~\ref{fig:rus_buggy_qcts}) from its control flow graph, where $\mathcal{H}$ is the $4$-dimensional Hilbert space of qubits $q_0$ and $q_1$, $L=\{l_i\}_{i=0}^{14}$, $AP = \{c\}\cup\{l_i\}_{i=0}^{14}$. For the labeling functions, we first set $\tilde{f}(l_0) = \{\mathrm{span}(\{|000\rangle\})\}$ as the initialization of quantum states. In addition, the classical proposition $c$ is included in the labeling of locations $l_{12}$, $l_{13}$, and $l_{14}$ (depicted as colored nodes), indicating that the classical register $c$ has value $1$ at these locations. For the purpose of subsequent verification, we further define $l_i\in f(l_{i})$. For all locations except $l_0$, $\tilde{f}(l)=\emptyset$.
In QisMC, the construction of the quantum-classical transition system follows the control flow graph of the program. The detailed construction process will be discussed in Section~\ref{design}.
\end{example}

\subsection{Quantum Computation Tree Logic}\label{sec:qctl-def}

In this subsection, we introduce a temporal logic for specifying properties of quantum programs. 
Our formulation adopts the notion of atomic propositions from Birkhoff-von Neumann quantum logic~\cite{birkhoff1975logic}, which has also been explored in the model checking theories of~\cite{ying2021model}. 
To address two issues that are not handled in~\cite{ying2021model}---namely, reasoning over uncountably many quantum states and compatibility with classical propositions---we define a hierarchical fusion of temporal logics. 
Moreover, since quantum measurements in a quantum-classical transition system intrinsically induce branching control flow with multiple possible evolutions, we adopt a Computation Tree Logic-style structure named qCTL to capture the temporal semantics.

\textbf{Birkhoff-von Neumann logic} is a propositional logic for reasoning about static properties of quantum systems~\cite{birkhoff1975logic,Chiara2004Reasoning,ying2021model}. The propositional formulas $\psi$ are constructed from a set $AP$ of atomic propositions using  logical connectives $\neg$ (negation) and $\land$ (conjunction). For their semantics, suppose $\mathcal{H}$ is a finite-dimensional Hilbert space and $\mathcal{P}(\mathcal{H})$ is the set of subspaces of $\mathcal{H}$. It is  well-known that $(\mathcal{P}(\mathcal{H}),\cap,\lor,\bot)$ is an orthomodular lattice with partial order $\subseteq$ (inclusion), where  $\cap$ and $\bot$ stand for intersection and orthocomplement of subspaces, respectively, and the join $\lor$ is defined as the subspace spanned by the linear combinations of vectors of all its operands.
$$\bigvee_i X_i = \mathrm{span}\left(\bigcup_i X_i\right).$$
Then each atomic proposition $A$ in $AP$ is interpreted as a subspace $\llbracket A\rrbracket$ in $\mathcal{P}(\mathcal{H})$, and logical connectives $\neg$ and $\land$ are interpreted as $\bot$ and $\cap$, respectively. Therefore, the  semantics of any propositional formula $\psi$ is also a subspace of $\mathcal{H}$, write $\llbracket \psi\rrbracket$. It can be computed by applying (the semantic interpretation of) connectives to  (the interpretation of)  atomic propositions in $A$.
For example, disjunction $\lor$ can be defined as a derived connective by $\psi_1\lor \psi_2:= \neg(\neg \psi_1\land\neg \psi_2)$, and it is easy to show that $\llbracket \psi_1\lor \psi_2\rrbracket = \llbracket\psi_1\rrbracket\lor\llbracket\psi_2\rrbracket$, which is consistent with the definition of join. Moreover, satisfaction of a proposition $\psi$ by a quantum state $\rho$ is defined as follows:
$$\rho\models \psi \text{ iff } \text{supp}(\rho)\subseteq \llbracket\psi\rrbracket.$$
In the following discussion, unless otherwise stated or causing ambiguity, we abuse the use of a logical formula $\psi$ and its semantics $\llbracket \psi\rrbracket$, using the logical notation to denote a subspace.

\textbf{Syntax of qCTL}: Our quantum computation tree logic qCTL is built upon Birkhoff-von Neumann logic. For the purpose of model checking Qiskit programs, qCTL should accommodate both quantum logical propositions and Boolean logical propositions, and  have unified logical connectives and temporal operators. We constructed a hierarchical fusion of CTL with Birkhoff-von Neumann logic to support these requirements. 
  
\newcommand{\cand}{\mathbin{\&}}
\newcommand{\cneg}{\mathbin{!}}
\begin{definition}\label{qctl}
    Quantum computation tree logic has a  three-layer structure:
    \begin{itemize}
        \item Quantum propositional formulas:\ \ $\psi::= A\mid \neg \psi\mid \psi_1\land \psi_2$
        \item State formulas:\ \  $\Phi ::= \psi\mid c \mid \exists\phi\mid \forall\phi\mid \cneg\Phi\mid \Phi_1 \cand \Phi_2$
        \item Path formulas:\ \ $\phi::=O\,\Phi\mid \Phi_1\,U\,\Phi_2$
    \end{itemize}
    Here, $A$ stands for an atomic proposition of Birkhoff-von Neumann logic, and $c$ represents a classical atomic proposition. The symbols $\neg$ and $\land$ are quantum connectives as defined previously, whereas the Boolean connectives $\cneg$ and $\cand$ denote negation and conjunction in classical propositional logic. The path quantifiers $\exists, \forall$ capture the branching structure of a computation tree, expressing \textit{for at least one path} and \textit{for all paths} respectively. The temporal operators $O$ (next) and $U$ (until) appear in path formulas, whose interpretations are consistent with those of the classical CTL~\cite{baier2008principles}. 
    The  abbreviations $\square$ (always) and $\Diamond$ (eventually) are defined in the familiar way~\cite{baier2008principles}.
    
\end{definition}

\textbf{Remark}: Readers may wonder whether it is necessary for qCTL to encapsulate quantum propositional formulas as a separate layer. In particular, one might ask whether quantum propositions could be placed directly in state formulas, as in~\cite{ying2021model}, where quantum logical connectives and path quantifiers are mixed at the same syntactic level. However, such a design is problematic because the satisfaction relation for quantum propositions is not inherently Boolean. For example, given a quantum state $\rho$ and a quantum proposition $P$, the law of excluded middle does not generally hold, i.e., it is not
necessarily the case that either $\rho \models P$ or $\rho \models \neg P$. Instead, it may happen that $0 < \mathrm{tr}(P\rho) < 1$. For this reason, quantum logical connectives such as $\neg$ are restricted to the level of quantum propositions and cannot appear in front of path quantifiers. For instance, a formula of the form $\Phi = \neg \exists O A$ would imply a purely Boolean interpretation, namely that it is false that there exists a path whose next location satisfies $A$, which is incompatible with the non-Boolean nature of quantum proposition satisfaction.

\begin{example}[Temporal logic specification of RUS program]\label{eg:qctl-formula}
  For the RUS program in  Example~\ref{trans-prog}, the following formula is a valid qCTL specification: 
  \begin{align}\label{fm:rusdebug}
      \forall \square \left(l_{14}\to  \mathrm{span}\left(\frac{1}{\sqrt{3}}|001\rangle+\frac{i\sqrt{2}}{\sqrt{3}}|011\rangle\right)\right).
  \end{align}
where $\to$ denotes the classical implication. Intuitively, it means that whence the program point reaches location $ l_{14}$, the quantum variables will be in the state $\frac{1}{\sqrt{3}}|001\rangle + \frac{i\sqrt{2}}{\sqrt{3}}|011\rangle$ (up to a global phase), which actually describes the program's functionality. We will later see that model checking this formula can reveal the bug in Example~\ref{eg:rus1}.

Another example is: $\exists\Diamond\neg\mathcal{H}$ where $\mathcal{H}$ represents the entire Hilbert space. In this case, $\neg\mathcal{H}$ corresponds to the orthogonal complement $\mathcal{H}^\bot$, which is the zero-dimensional subspace. Intuitively, this specification states that there exists a location that is eventually reachable where the quantum state space collapses
to the zero subspace.
\end{example}

\textbf{Semantics of qCTL}: Given a quantum-classical transition system $\mathcal{M}$ together with its labeling functions, the semantics of a state formula $\Phi$ in qCTL is defined by the satisfaction relation $l \models \Phi$ for each location $l$.
Intuitively, the annotations in the labeling function describe quantum propositions that should hold at particular program locations. For example, if a quantum proposition $\Phi_1$ is annotated at the initial location $l_0$, then a location $l_i$ should satisfy a proposition $\Phi_2$ only if every quantum state $\rho$ satisfying $\Phi_1$ evolves from $l_0$ to $l_i$ into a state that satisfies $\Phi_2$.

However, annotations may appear at multiple locations in the program, and the quantum operations along transitions may cause these local annotations to become mutually inconsistent when propagated through the system. Therefore, a mechanism is needed to determine which quantum propositions can consistently hold at each location.

To address this issue, we introduce the notions of \emph{strongest post-condition} (sp) and \emph{weakest pre-condition} (wp). The strongest post-condition of a location $l_i$ characterizes the subspace generated by all quantum states that evolve to $l_i$ \textbf{from states satisfying the annotations at preceding locations}. Conversely, the weakest pre-condition characterizes the largest subspace of states at $l_i$ such that every state in this subspace, \textbf{when evolved along any execution path to a location} $l_k$, satisfies the annotation at $l_k$.
Under the ordering of subspace inclusion, these two constructions correspond to the least upper bound and greatest lower bound of the quantum propositions that can be consistently satisfied at a location.

\paragraph{Strongest post-condition.} 
Let function $\upsilon: \mathcal{P}(\mathcal{H})^{L}\rightarrow \mathcal{P}(\mathcal{H})^{L}$ be defined as follows: for any $X=(X_l)_{l\in L},$ we set $\upsilon(X)=(Y_l)_{l\in L}$, where:  
\begin{align}
    Y_l = X_l\lor \left(\bigvee_{k\in \text{pre}(l)}\mathcal{E}_{kl}(X_k)\right)\lor \bigvee \tilde{f}(l) \label{upsilon_sp}.
\end{align}
Here,  $\mathcal{E}_{kl}$ stands for the super-operator in the transition from location $k$ to $l$, and  $\mathcal{E}_{kl}(X_k)$ is the image of $X_k$ under $\mathcal{E}_{kl}$. By Tarski's fixed point theorem, we know that $\upsilon$ has the least fixed point, say $X^\ast=(X^\ast_l)_{l\in L}$. It can be computed by iteration as widely employed in classical model checking~\cite{McMillan1993Symbolic}. Then the strongest post-condition of location $l$ is defined as  $sp[l]=X_l^\ast$.


\paragraph{Weakest pre-condition.} 
Let function $\Upsilon: \mathcal{P}(\mathcal{H})^{L}\rightarrow \mathcal{P}(\mathcal{H})^{L}$ be defined as follows: for any $X=(X_l)_{l\in L},$ we set $\Upsilon(X) = (Y_l)_{l\in L}$, where:
\begin{align}
    Y_l = X_l\land 
    \left(\bigwedge_{k\in \text{succ}(l)}\mathcal{E}_{lk}^{-1}(X_k)\right)\land \bigwedge \tilde{f}(l) \label{upsilon_wp}.
\end{align}
Here,  $\mathcal{E}_{lk}^{-1}(X)$ is defined according to the three operation types in Fig.~\ref{fig:opsem}, 
\begin{enumerate}
    \item If $\mathcal{E}_{lk}$ is unitary, the pre-image $\mathcal{E}_{lk}^{-1}$ is the inverse of $\mathcal{E}_{lk}$.
    \item If $\mathcal{E}_{lk}$ is a projector $P\in\mathcal{P}(\mathcal{H})$, 
    $P^{-1}(X) = ( P\land  X)\lor (I-P)$, $I$ is the identity projection,
    \item If $\mathcal{E}_{lk}$ is an initialization to the state $|\psi\rangle$,
    \begin{align*}
    \mathcal{E}_{lk}^{-1}(X) = 
    \begin{cases}
    0, & \text{if }|\psi\rangle \notin  X\\
    I, &  \text{otherwise.}\\
    \end{cases}
    \end{align*}
\end{enumerate}
Especially, if $\tilde{f}(l)=\emptyset$, then the conjunction over the set $\bigwedge \tilde{f}(l) = \text{supp}({I})$. By Tarski's fixed point theorem and the duality, we know that $\Upsilon$ has the greatest fixed point, say $X^\ast = (X^\ast_l)_{l\in L}$. Then the weakest pre-condition of location $l$ is defined as $wp[l] = X^\ast_l$.


With the above preparation, we are now ready to define the semantics of qCTL. 

 \begin{definition}\label{def:qCTLsem}
    {Let $\mathcal{M} = (L,\mathcal{H},\mathcal{T},l_0, \text{AP}, f, \tilde{f})$ be a quantum-classical transition system; $l\in L$ is a location; and $\pi[i]$ is the $i$th location along the path $\pi$.}
     \begin{enumerate}
         \item For a quantum propositional formula $\psi$, $l\models \psi$ iff $sp(l)\subseteq \llbracket\psi\rrbracket\subseteq wp(l)$. 
         \item For state formula $\Phi$:
         \begin{enumerate}
             \item $l\models c$ iff $c\in f(l)$;
             \item $l\models \cneg\Phi$ iff $l\not\models\Phi$;
             \item $l\models\exists\phi$ iff $\pi\models \phi$ for some path $\pi$ starting in $l$;
             \item $l\models\forall\phi$ iff $\pi\models \phi$ for all paths $\pi$ starting in $l$;
             \item $l\models\Phi_1 \cand \Phi_2$ iff $l\models \Phi_1$ and $l\models \Phi_2$.
         \end{enumerate}
         \item For path formula $\phi$:
         \begin{enumerate}
             \item $\pi\models O\Phi$ iff $\pi[1]\models\Phi$;
             \item $\pi\models \Phi_1 U\Phi_2$ iff there exists $i>0$ such that $\pi[i]\models \Phi_2$ and $\pi[j]\models\Phi_1$ for all $0\leq j < i$.
         \end{enumerate}
     \end{enumerate}
 \end{definition}

The semantics of state formulas and path formulas in qCTL are similar to those in classical CTL and several existing quantum temporal logics~\cite{ying2021model,feng2013model,Tsubasa2023Semantic}. 
However, two major innovations, \textbf{the hierarchical structure} of quantum propositional formulas and \textbf{the $sp$/$wp$-based satisfaction semantics} address the two challenges discussed at the beginning of this section. In particular, the uncountably many runtime quantum states are abstracted by finite-dimensional subspaces (captured by $sp$ and $wp$). 
Moreover, the hierarchical design of qCTL allows classical variables and quantum propositions to be handled in a unified manner. 
In the following sections, we present the model checking algorithm and the corresponding reasoning procedure over subspace logic.


\section{QCTL Model Checking}\label{sec4}

In this section, we develop our algorithm for model checking Qiskit programs against their temporal properties specified in qCTL defined in the last section. 
Formally, our quantum model checking problem can be stated as follows: given a qCTL formula $\Phi$ and a quantum-classical transition system $\mathcal{M}$, does $\Phi$ hold for (the initial location $l_0$ of) $\mathcal{M}$; that is,
$$\mathcal{M}\models \Phi ?$$ 
As previously defined, the two bounding constraints $wp$ and $sp$ are employed to restrict the satisfiable quantum propositions within the locations, which avoids the problem of an infinite collection of quantum states induced by while-loops in the program.
Based on this basic idea, our qCTL model checking algorithm is organized into the following phases:
\begin{enumerate}
    \item Bidirectional fixed-point iteration of quantum operations computing $wp$ and $sp$;
    \item Checking the satisfiability of each quantum proposition.
    \item Reducing qCTL model checking problem to a classical CTL model checking problem.
\end{enumerate}
Since phase (2) was already presented in the last section, this section focuses on phases (1) and (3). Phase (3) is particularly significant, as it allows our model checker QisMC to leverage efficient classical model checking algorithms and tools as back-end solvers.

\subsection{Bidirectional Quantum Fixed-point Iteration}\label{sec:bi-fixed}
In practice, the predicates $wp$ and $sp$ are computed through a bidirectional fixed-point iteration over the control-flow graph of the program. Intuitively, the algorithm propagates quantum propositions backward and forward along transitions until a global consistency
condition is reached. Both computations follow a BFS-style iteration over the transition graph until a fixed point is reached.

Algorithm~\ref{alg:wp} computes the weakest pre-conditions for all program locations by a backward fixed-point iteration. 
Each predicate $wp[i]$ is initialized with the conjunction of the annotations at location $l_i$. 
The queue is initialized with all annotated locations and terminal locations, which serve as boundary conditions for the propagation.
During the iteration, whenever a location $l_x$ is dequeued, the predicates of its predecessors are refined. 
For each $i \in \text{pre}(l_x)$, the algorithm computes the intersection of $wp[i]$ with the pre-image of $wp[x]$ under the corresponding transition super-operator. 
If the dimension of $wp[i]$ decreases, the updated location is reinserted into the queue to further propagate the refinement. 
Otherwise, a location is propagated only once to ensure that all predecessors are explored.
The iteration terminates when no predicate can be further refined, yielding the weakest pre-condition at each location.

The computation of strongest post-conditions follows a dual procedure. Instead of backward propagation, the algorithm performs a forward propagation starting from the initial location and annotated locations. During the iteration, image operations replace pre-image operations, and disjunction replaces conjunction when updating predicates. Together, $wp[i]$
and $sp[i]$ characterize the maximal set of quantum propositions that can consistently hold at each location. Proposition~\ref{prop:wp} states the consistency between the algorithm and the definition. 
In Section~\ref{design}, we will present the implementation details of the algorithm, including the realization of the quantum reasoning procedures.

\begin{proposition}\label{prop:wp}
    For each location $l\in L$, Algorithm~\ref{alg:wp} computes the weakest pre-condition $wp[l]$.
    Dually, we have an algorithm for computing the strongest post-condition $sp[l]$ (omitted for brevity).
\end{proposition}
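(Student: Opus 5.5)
The plan is to show that Algorithm~\ref{alg:wp} is a chaotic (worklist) iteration of the monotone operator $\Upsilon$ from (\ref{upsilon_wp}) on the finite-height lattice $\mathcal{P}(\mathcal{H})^{L}$. Such an iteration, started at the top element restricted by the annotations, converges to the greatest fixed point $X^\ast$, which is by definition $wp$.

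\textbf{Step 1: monotonicity of $\Upsilon$.} Each component of $\Upsilon(X)$ is a meet of $X_l$, the constant $\bigwedge\tilde f(l)$, and the pre-images $\mathcal{E}_{lk}^{-1}(X_k)$. Meets are monotone, so it suffices to check that each pre-image map is monotone in $X$. I check the three cases separately.
\begin{itemize}
\item For a unitary, the pre-image is $U^\dagger X$, which is clearly monotone.
\item For a projector, $P^{-1}(X)=(P\land X)\lor(I-P)$ is monotone because $\land$ and $\lor$ are monotone in the orthomodular lattice.
\item For an initialization, the pre-image jumps from $0$ to $I$ once $|\psi\rangle\in X$, which is monotone in $X$.
\end{itemize}
Hence $\Upsilon$ is monotone. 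Since $\dim\mathcal{H}<\infty$, every strictly decreasing chain in $\mathcal{P}(\mathcal{H})^{L}$ has length at most $|L|\cdot\dim\mathcal{H}$.

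\textbf{Step 2: invariants of the worklist loop.} Let $W^{(t)}$ be the vector $(wp[i])_i$ after $t$ updates. I will prove two invariants.
\begin{itemize}
\item[(a)] $X^\ast\subseteq W^{(t)}$ componentwise. Initially $W^{(0)}_l=\bigwedge\tilde f(l)$, which contains $X^\ast_l$ because $X^\ast=\Upsilon(X^\ast)\subseteq\bigwedge\tilde f(l)$. Each update replaces $W_i$ by $W_i\land\mathcal{E}_{ix}^{-1}(W_x)$. By monotonicity and $X^\ast_i\subseteq \mathcal{E}_{ix}^{-1}(X^\ast_x)$, which is a fixed-point property, the inclusion is preserved.
\item[(b)] Every $W^{(t)}$ is dominated by its initial value, and each update only decreases it. Therefore $W^{(t)}$ is a decreasing chain. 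By Step 1 there are only finitely many strict decreases, so the loop terminates. The ``propagate once'' rule ensures each location is dequeued at least once, so every edge is examined.
\end{itemize}

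\textbf{Step 3: the final vector $W$ is a fixed point.} This is the main obstacle, because the algorithm re-enqueues a location only when its dimension drops. I must show that at termination, for every edge $i\to x$, $W_i\subseteq\mathcal{E}_{ix}^{-1}(W_x)$. The argument uses two facts.
\begin{itemize}
\item Consider the last time $W_x$ changed. Its dimension strictly dropped at that moment, since a strictly smaller subspace obtained by intersection has strictly smaller dimension. So $x$ was enqueued and later dequeued with its final value.
\item At that dequeue, $W_i$ was replaced by $W_i\land\mathcal{E}_{ix}^{-1}(W_x)$, and afterwards $W_i$ only shrinks.
\end{itemize}
If $W_x$ never changed, the initial enqueue of annotated and terminal locations, together with the once-propagation rule, covers the edge. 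I will need to verify carefully that every location with an incoming constraint is dequeued at least once after its final value is reached. A location whose value never shrinks must still be propagated once; this is exactly what the ``propagated only once'' clause guarantees. With this, $W\subseteq\Upsilon(W)$, and since $\Upsilon(W)\subseteq W$ trivially, $W=\Upsilon(W)$.

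\textbf{Conclusion.} $W$ is a fixed point containing the greatest fixed point $X^\ast$ by invariant (a), so $W=X^\ast=wp$. For $sp$ I would dualize the argument: replace $\Upsilon$ by $\upsilon$ from (\ref{upsilon_sp}), use monotonicity of images $\mathcal{E}(X)=\mathrm{supp}\,\mathcal{E}(P_X)$, and track increasing chains with dimension increases, converging to the least fixed point.
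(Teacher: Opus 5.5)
Your proposal is correct and follows the same three-part structure as the paper's proof: termination by finite height, preservation of $Z\sqsubseteq W^{(t)}$ for any fixed point $Z$ via monotonicity, and the terminal vector being a fixed point of $\Upsilon$; your Step~3 in fact supplies the worklist argument that the paper merely asserts. One small correction there: not every location is necessarily dequeued, since locations from which no annotated or terminal location is reachable are never enqueued, but such locations keep the value $\mathcal{H}$ and each of the three pre-image maps sends $\mathcal{H}$ to $\mathcal{H}$, so the edge constraints into them hold trivially.
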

    

{\begin{algorithm}[htbp]
\footnotesize
    \renewcommand{\algorithmicrequire}{\textbf{Input:}}
    \renewcommand{\algorithmicensure}{\textbf{Output:}}
    \caption{Computation of weakest pre-conditions}
    \label{alg:wp}
    \begin{algorithmic}[1]
    \Require A quantum-classical transition system $\mathcal{M} = (L, \mathcal{H}, \mathcal{T}, l_0, AP, f, \tilde{f})$
    \Ensure $wp[i]$ for all $l_i \in L$
    \State $wp \gets$ array of length $|L|$ initialized as $\bigwedge \tilde{f}(l_i)$
    \Comment{Initialization}
    \State $Q \gets$ empty queue
    \State $Seen\gets \emptyset$
    \For{each location $l_i$}
    \Comment{Starting locations of BFS}
        \If {$\tilde{f}(l_i)\neq \emptyset$ \textbf{ or } $\text{succ}(l_i)=\{l_i\}$}
            \State enqueue($Q,i$)
            \State add $i$ to $Seen$
        \EndIf
    \EndFor
    \While{$Q$ not empty}
    \Comment{Fixed-point iterations}
        \State $x \gets$ dequeue($Q$)
        \For{each $i \in \text{pre}(l_x)$}
            \State $\mathrm{new\_wp} \gets wp[i] \land \mathcal{E}_{ix}^{-1}(wp[x])$
            \If{$dim(\mathrm{new\_wp})< dim(wp[i])$}
                \State $wp[i]\gets \mathrm{new\_wp}$
                \State enqueue($Q,i$)
                \State add $i$ to $Seen$
            \ElsIf{$i \notin Seen$}
                \State enqueue($Q,i$)
                \State add $i$ to $Seen$
            \EndIf
        \EndFor
    \EndWhile
    \end{algorithmic}
\end{algorithm}}

\subsection{Reduction from qCTL Model Checking to CTL Model Checking}\label{Main-Theorem}
In Definition~\ref{def:qCTLsem}, we presented the semantic satisfaction relation of qCTL formulas. In the previous section, we further provided algorithms for determining the satisfaction of quantum propositions by computing the operators $sp$ and $wp$. A key observation underlying our approach is that, although the semantics of a quantum proposition is defined over subspaces of a Hilbert space, its satisfaction at a given location is still a Boolean property. Once the subspace associated with a proposition is determined, a configuration either satisfies it or does not.

Based on this observation, each quantum proposition occurring in a formula---namely, expressions constructed solely using quantum logical connectives---can be associated with a fresh classical atomic proposition. The satisfaction of this atomic proposition at a location is defined to coincide with the satisfaction of the corresponding quantum proposition. Consequently, after evaluating all quantum propositions, the remaining verification task can be reduced to a classical CTL model checking problem. Classical model checking algorithms and tools can then be applied to determine the satisfaction of the resulting formula.

Formally, we present the following Algorithm~\ref{alg:reduction} and Theorem~\ref{thm:reduction}.


{
\begin{algorithm}[ht]
\footnotesize
    \renewcommand{\algorithmicrequire}{\textbf{Input:}}
    \renewcommand{\algorithmicensure}{\textbf{Output:}}
    \caption{Reduction from qCTL model checking to classical CTL model checking}
    \label{alg:reduction}
    \begin{algorithmic}[1]
        \Require A quantum-classical transition system $\mathcal{M} = (L, \mathcal{H}, \mathcal{T}, l_0, \mathrm{AP}, f, \tilde{f})$, a qCTL formula $\Phi$.
        \Ensure A classical transition system $TS = TS[\Phi,\mathcal{M}]$ and a CTL formula $\varphi=\varphi[\Phi,\mathcal{M}]$
        \Statex \textbf{Step 1. Construct $\varphi=\varphi[\Phi,\mathcal{M}]$ by structural induction on $\Phi$.}
        \Statex \textit{Base cases: }
        \begin{enumerate}
            \item If $\Phi = c$, set $\varphi[\Phi,\mathcal{M}] = c$.
            \item If $\Phi = \psi$ (a quantum proposition formula), introduce a fresh classical proposition $\omega_\psi$, and set $\varphi[\Phi,\mathcal{M}] = \omega_\psi$. Let $\Omega$ be the set of $\omega_\psi$ introduced in this step.
        \end{enumerate}
        \Statex\textit{Inductive steps: }
        \begin{enumerate}
            \item If $\Phi = \cneg \Phi'$,\quad $\varphi[\Phi, \mathcal{M}] = \cneg \varphi[\Phi', \mathcal{M}]$.
            \item If $\Phi = \Phi_1\cand\Phi_2$,\quad $\varphi[\Phi, \mathcal{M}] = \varphi[\Phi_1, \mathcal{M}]\cand\varphi[\Phi_2, \mathcal{M}]$.
            \item If $\Phi = \vartriangle O\, \Phi'$, where $\vartriangle\in \{\exists, \forall\}$,\quad $\varphi[\Phi, \mathcal{M}] = \vartriangle O\, \varphi[\Phi', \mathcal{M}]$.
            \item If $\Phi = \vartriangle \Phi_1\, U\, \Phi_2$, where $\vartriangle\in \{\exists, \forall\}$,\quad $\varphi[\Phi, \mathcal{M}] = \vartriangle \varphi[\Phi_1, \mathcal{M}]\, U\, \varphi[\Phi_2, \mathcal{M}]$.
        \end{enumerate}
        \Statex\textbf{Step 2. Construct $TS[\Phi,\mathcal{M}]$.}
        \begin{enumerate}
            \item Compute $sp$ and $wp$ for $\mathcal{M}$.
            \item For each pair $(\psi,\omega_\psi)$ introduced in Step 1, define
            $$
                f'(l) = \{\omega_\psi \mid sp(l)\subseteq \llbracket\psi\rrbracket\subseteq wp(l)\}, \quad \forall l \in L.
            $$
            \item Define the combined labeling function
            $$
                g(l) = f(l) \cup f'(l).
            $$
            \item Construct
            $$
                TS = \langle L, \mathrm{dom}(\mathcal{T}), l_0, \mathrm{AP}\cup\Omega, g \rangle.
            $$
        \end{enumerate}
    \end{algorithmic}
\end{algorithm}
}

\begin{theorem}[qCTL model checking reduction]\label{thm:reduction}
    For any qCTL state formula $\Phi$ and quantum-classical transition system $\mathcal{M}$, let $(TS[\Phi,\mathcal{M}],\varphi[\Phi,\mathcal{M}])$ be the output of Algorithm~\ref{alg:reduction}. Then we have: 
    $$
    \mathcal{M}\models \Phi\quad \text{ if and only if }\quad TS[\Phi,\mathcal{M}]\models\varphi[\Phi,\mathcal{M}].
    $$
\end{theorem}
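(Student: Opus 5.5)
We prove a stronger statement by structural induction on the qCTL state formula $\Phi$: for every subformula $\Phi'$ of $\Phi$ and every location $l\in L$,
\begin{align*}
l\models_{\mathcal{M}}\Phi' \quad\text{iff}\quad l\models_{TS}\varphi[\Phi',\mathcal{M}],
\end{align*}
where $TS=TS[\Phi,\mathcal{M}]$ is fixed once for the whole formula $\Phi$. The theorem then follows by taking $\Phi'=\Phi$ and $l=l_0$, since $TS$ has the same initial location $l_0$. Because every quantum propositional subformula of $\Phi$ receives its proposition $\omega_\psi$ in Step 1, the labeling $g$ is defined for all of them. Hence the induction can be carried out in the single structure $TS$.

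\textbf{Base cases.} For $\Phi'=c$, we have $l\models_{\mathcal{M}} c$ iff $c\in f(l)$. Since $c$ is not a fresh symbol, $c\in f(l)$ iff $c\in g(l)$, which holds iff $l\models_{TS}c$. For a quantum propositional formula $\psi$, Definition~\ref{def:qCTLsem}(1) says that $l\models\psi$ iff $sp(l)\subseteq\llbracket\psi\rrbracket\subseteq wp(l)$. By Step 2, this condition is exactly $\omega_\psi\in f'(l)\subseteq g(l)$. The fresh symbols must be disjoint from $\mathrm{AP}$, so that $\omega_\psi\in g(l)$ iff $\omega_\psi\in f'(l)$. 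Here we use Proposition~\ref{prop:wp} and its dual: the $sp$ and $wp$ computed in Step 2(1) coincide with the fixed-point definitions used in the qCTL semantics.

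\textbf{Inductive steps.} The Boolean cases $\cneg$ and $\cand$ follow directly from the induction hypothesis, since both semantics treat them identically. For the cases $\vartriangle O\,\Phi_1$ and $\vartriangle\Phi_1 U\Phi_2$, we must show that $\mathcal{M}$ and $TS$ have the same set of paths from each location. A path in $\mathcal{M}$ is a sequence of locations following transitions $l\stackrel{\mathcal{E}}\rightarrow l'$. The transition relation of $TS$ is $\mathrm{dom}(\mathcal{T})$, which we interpret as the projection $\{(l,l') : \exists\mathcal{E}.\ l\stackrel{\mathcal{E}}\rightarrow l'\in\mathcal{T}\}$. This relation is total because the self-loops $l\stackrel{\mathcal{I}}\rightarrow l$ were added to locations without successors. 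The super-operators label transitions but do not affect path existence in the qCTL path semantics. The two structures therefore have identical path sets. Applying the induction hypothesis pointwise along each path gives $\pi\models O\Phi_1$ in $\mathcal{M}$ iff $\pi\models O\varphi_1$ in $TS$, and likewise for until. The semantic clauses for $U$ match the CTL clauses under the same indexing convention. Quantifying over the same path set then yields the $\exists$ and $\forall$ cases.

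\textbf{Main obstacle.} The main difficulty is making the quantum base case rigorous. Definition~\ref{def:qCTLsem} uses the fixed-point objects $sp$ and $wp$, while $TS$ is built from the algorithmically computed values. The argument therefore relies on Proposition~\ref{prop:wp} and the corresponding result for $sp$. We also need the observation that $sp$ and $wp$ depend only on $\mathcal{M}$ and not on $\Phi$, so a single labeling serves every subformula. A smaller bookkeeping point is the reading of $\mathrm{dom}(\mathcal{T})$ as the underlying location graph together with the added self-loops.
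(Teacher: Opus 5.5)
Your proposal is correct and follows the same route as the paper's proof: structural induction on $\Phi$. The classical and quantum base cases are handled by the labeling $g = f\cup f'$, the Boolean cases are immediate, and the temporal cases follow because $TS$ has the same locations and transition graph as $\mathcal{M}$. Your version is somewhat more careful than the paper's: you make explicit the strengthening to all locations $l$ (which the paper's temporal case uses tacitly), the freshness of the $\omega_\psi$, totality via the added self-loops, and the need for the qCTL $U$ clause (stated with $i>0$) to match the CTL convention used on $TS$.
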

\begin{proof} See Appendix \ref{proof-4.3}\end{proof}
 
The reduction algorithm proceeds in two main steps. First, the algorithm replaces the quantum propositions in the qCTL formula with fresh classical propositions. Second, it constructs a classical transition system whose labeling encodes the satisfaction of these quantum propositions.

In Step~2, we first compute the $wp$ and $sp$ operators for each location using the algorithm described in Section~\ref{sec:bi-fixed}. Then, for each location, we evaluate the satisfiability of all quantum propositions appearing in the qCTL formula and use the results as the truth values of the corresponding classical propositions. The remaining components of the new transition system $TS$, such as the set of locations and the transition relation, are inherited directly from $\mathcal{M}$.

After reformulating the transition system and the qCTL formula into their CTL counterparts, the original qCTL model checking problem can be solved using existing classical model checking tools. This overall workflow differs significantly from previous quantum model checking approaches~\cite{feng2015qpmc,do2024symbolic}. By leveraging mature classical model checking techniques, our approach mitigates the state explosion problem and enables the verification of substantially larger benchmarks.

{\vskip 4pt}

To conclude this section, let us apply the model checking algorithm presented in this section to the example of the RUS program to illustrate its practicality. 

\begin{example}[Model checking RUS program]\label{qctl-mc}
    Let us check the transition system in Example~\ref{trans-prog} against the temporal logical  formula~\ref{fm:rusdebug} in Example~\ref{eg:qctl-formula}, thereby performing the verification process for the RUS program described in Example~\ref{eg:rus1}.

    {\vskip 4pt}
    
    To get the satisfiability, the bidirectional quantum fixed-point iterations introduced in this section are conducted. The weakest pre-condition and strongest post-condition of $l_{14}$ are:
    $$wp[14]=\mathcal{H}^{2\otimes 2\otimes 2};\quad sp[14]=\mathrm{span}(|001\rangle,|011\rangle)$$
    Obviously, $\mathrm{span}\left(\frac{1}{\sqrt{3}}|001\rangle + \frac{i\sqrt{2}}{\sqrt{3}}|011\rangle\right)\subset sp[14]\subset wp[14]$, the satisfaction does not hold. In practice, we introduce a fresh classical proposition corresponding to the quantum proposition $\mathrm{span}\left(\frac{1}{\sqrt{3}}|001\rangle+\frac{i\sqrt{2}}{\sqrt{3}}|011\rangle\right)$, rewrite the formula accordingly, and then submit the specification to a classical model checker.

    Through the verification of this property, we conclude that the program implements incorrect functionality. The reason lies in the absence of reinitialization for $q_0$ inside the loop body, which causes unexpected residual states to appear at $l_{11}$ from the second iteration onward. Consequently, the post-condition at $l_{14}$ becomes larger than expected.
\end{example}

\section{System Design and Implementation}\label{design}

The preceding sections presented the theoretical foundations of QisMC. In this section, we describe the system design and the end-to-end debugging workflow of QisMC. As illustrated in Fig.~\ref{fig:QisMCworkflow}, QisMC consists of four main modules:

\begin{itemize}
    \item \textbf{Automated model construction.} This module automatically constructs a quantum transition system from an input Qiskit program.
    \item \textbf{Symbolic quantum reasoning backend.} This module provides unified representations for quantum states, operations, and subspaces, and implements the corresponding quantum reasoning and logical operations, as well as the fixed-point iteration algorithms introduced in Section~\ref{sec:bi-fixed}.
    \item \textbf{Property specification interface.} This module provides interfaces for annotating and labeling program locations, specifying quantum propositions of interest, and composing them into qCTL specifications.
    \item \textbf{Model checking and counterexample-guided debugging.} This module performs model checking on the resulting model and specification, and translates verification results, particularly counterexamples, into diagnostic information for debugging the original program.
\end{itemize}

QisMC adopts a two-layer architecture consisting of Python and C++. Among the four modules, the symbolic quantum reasoning backend --- CFLOBDD~\cite{cflobdd} in our evaluation --- is implemented in C++ for efficiency. The remaining components, including the integration with Qiskit and the input/output interfaces of the model checking workflow, are implemented in Python for flexibility and ease of use. NuSMV is employed as an external model checking engine and is invoked from the Python layer.

\subsection{Automated Model Construction}
QisMC accepts a Qiskit \texttt{QuantumCircuit} instance as program input, which can either be constructed programmatically through the Qiskit APIs, including dynamic control-flow constructs, or imported from an OpenQASM file. QisMC then automatically translates the input program into a transition system based on its control-flow graph (CFG). In the resulting transition system, each edge corresponds to a quantum operation or a control-flow transition, while each location represents a program state during execution and is assigned a unique identifier for subsequent reference. QisMC supports Qiskit’s major control-flow constructs, including \texttt{if\_else}, \texttt{while\_loop}, \texttt{switch\_case}, and \texttt{for\_loop}, as well as individually conditioned gates.

In Qiskit programs, classical registers store measurement outcomes and can subsequently be used to determine control-flow decisions. QisMC automatically registers these classical variables as built-in classical atomic propositions. At each location of the constructed transition system, their valuations are determined by the corresponding classical program state. These propositions can then be directly referenced when specifying temporal properties involving classical control flow.

\subsection{Implementation of the Fixed-point Iteration}\label{sec:implementation}
To implement the proposed fixed-point iteration, we first need an efficient representation of quantum states and quantum operators, together with basic operations on them which include the intersection, union, and complement of subspaces, as well as the computation of the image of a quantum state under a given operator. 

Due to the nature of quantum computation, representing arbitrary quantum states on a classical computer generally requires resources exponential in the number of qubits. To mitigate this challenge, we adopt decision diagrams as a symbolic representation for quantum states and operators. In this work, we employ the CFLOBDD~\cite{cflobdd} as the reasoning backend of QisMC which has previously demonstrated strong symbolic representation capabilities in related tasks such as quantum reachability analysis~\cite{dai2024qreach}.

The acronym CFL in CFLOBDD stands for \emph{context-free language}. Intuitively, CFLOBDDs exploit hierarchical reuse of structurally similar parts in the original matrix representation, allowing them to share repeated substructures in a manner analogous to procedure calls. This design enables significant compression for structured inputs. In particular, for certain classes of matrices, CFLOBDDs can be exponentially more succinct than traditional BDD- or ADD-based representations~\cite{robdd,algebraicDD}, and doubly exponentially more succinct than explicit matrix representations~\cite{cflobdd,wcflobdd}.

As a quantum simulation tool, CFLOBDD supports different kinds of vector and matrix calculations. Building on this fundamental capability, we aim to reduce all quantum reasoning operations involved in Algorithm~\ref{alg:wp} to vector and matrix computations, specifically: scalar multiplications, vector inner products, matrix-vector multiplications, and tensor products of matrices or vectors. A technique similar to that used in~\cite{dai2024qreach} is adopted, namely, avoiding matrix-matrix multiplications throughout the computation process in order to prevent excessive time and space overhead in decision diagram representations. 

The implementation of QisMC requires two basic techniques: 

(1) \textbf{Image Computation}: For any subspaces $X$ of a Hilbert space that occur during the execution of the model checking algorithm, once it serves as an annotation, and continues through the fixed-point iteration and subsequent satisfaction checking, it is represented by a set of orthonormal basis vectors. Thus, it is a basic step in the model checking to compute the image of such a subspace under a super-operator $\mathcal{E}$. This issue can be solved employing Proposition~\ref{prop:image} from \cite[pp.~210-211]{YINGFOUNDATION}. 
\begin{proposition}\label{prop:image}
    If $\mathcal{E}$ has the Kraus operator-sum representation $\mathcal{E} = \sum_{i\in I}E_i\circ E_i^\dagger$, then
    $$\mathcal{E}(X) = \text{supp}\{E_i|\psi\rangle: i\in I \text{ and } |\psi\rangle\in X\}$$
\end{proposition}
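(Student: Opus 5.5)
We need to prove that $\mathcal{E}(X)=\mathrm{supp}\{E_i|\psi\rangle : i\in I,\ |\psi\rangle\in X\}$. Here $\mathcal{E}(X)$ denotes the image of the subspace $X$, i.e. the support of $\mathcal{E}(P_X)$ where $P_X$ is the projector onto $X$; equivalently, it is the join of $\mathrm{supp}\,\mathcal{E}(\rho)$ over all states $\rho$ with $\mathrm{supp}(\rho)\subseteq X$. The plan is to first reduce to the projector, and then use two elementary facts about supports of positive operators. The first fact is that for positive operators $A,B$ we have $\mathrm{supp}(A+B)=\mathrm{supp}(A)\lor\mathrm{supp}(B)$. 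The second is that for any operator $E$ we have $\mathrm{supp}(E P E^\dagger)=E(\mathrm{supp}\,P)$, the range of $E$ restricted to $\mathrm{supp}\,P$.

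The first step is the reduction. Any $\rho$ with $\mathrm{supp}(\rho)\subseteq X$ satisfies $\rho\le \lambda P_X$ for $\lambda=\|\rho\|$. Since $\mathcal{E}$ is completely positive, it is monotone in the Löwner order, so $\mathrm{supp}\,\mathcal{E}(\rho)\subseteq \mathrm{supp}\,\mathcal{E}(P_X)$. Conversely, $P_X/\dim X$ is itself a state supported in $X$. Hence both readings of $\mathcal{E}(X)$ coincide, and both equal $\mathrm{supp}\,\mathcal{E}(P_X)$.

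The second step is the computation. Write $P_X=\sum_j|\psi_j\rangle\langle\psi_j|$ with $\{|\psi_j\rangle\}$ an orthonormal basis of $X$. This gives
$$\mathcal{E}(P_X)=\sum_{i,j}E_i|\psi_j\rangle\langle\psi_j|E_i^\dagger.$$
Every summand is positive. For positive $A,B$, $\ker(A+B)=\ker A\cap\ker B$, because $\langle v|(A+B)|v\rangle=0$ forces both terms to vanish. Taking orthocomplements gives the support of the sum as the join of the supports, and by induction this extends to finitely many summands. Each summand $E_i|\psi_j\rangle\langle\psi_j|E_i^\dagger$ has support $\mathrm{span}\{E_i|\psi_j\rangle\}$. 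Therefore
$$\mathrm{supp}\,\mathcal{E}(P_X)=\mathrm{span}\{E_i|\psi_j\rangle: i\in I,\ j\}.$$

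The third step removes the dependence on the chosen basis. By linearity of each $E_i$, every vector $E_i|\psi\rangle$ with $|\psi\rangle\in X$ is a linear combination of the vectors $E_i|\psi_j\rangle$. So the span over the basis vectors equals $\mathrm{span}\{E_i|\psi\rangle : i\in I,\ |\psi\rangle\in X\}$, which is the right-hand side of the statement, reading $\mathrm{supp}$ of a set of vectors as their span.

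The main obstacle is not computational but definitional: the paper never fixes what $\mathcal{E}(X)$ means for a subspace. I would make the first step explicit so that the result holds under either reading, as the support of $\mathcal{E}(P_X)$ or as the join over all states supported in $X$, and I would cite \cite[pp.~210-211]{YINGFOUNDATION} for the convention. The remaining work is the kernel-intersection argument in the second step, which is routine.
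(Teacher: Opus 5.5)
Your proof is correct. The paper itself gives no argument: it imports the statement from \cite[pp.~210--211]{YINGFOUNDATION}. There the image of a subspace is defined via pure states, $\mathcal{E}(X)=\bigvee_{|\psi\rangle\in X}\mathrm{supp}\,\mathcal{E}(|\psi\rangle\langle\psi|)$, and the right-hand side is a span, so your reading of $\mathrm{supp}$ of a set of vectors as their span is the intended one.

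With that definition the result is immediate from the one lemma you isolate. For each $|\psi\rangle\in X$, the operator $\mathcal{E}(|\psi\rangle\langle\psi|)=\sum_{i}E_i|\psi\rangle\langle\psi|E_i^\dagger$ is a sum of positive rank-one operators. Its support is therefore $\mathrm{span}\{E_i|\psi\rangle : i\in I\}$, and joining over $|\psi\rangle\in X$ gives the claim. No projector $P_X$, L\"owner monotonicity, or basis-independence step is needed.

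Your detour through $P_X$ is longer, but it buys something the paper leaves implicit. It shows that the projector reading and the mixed-state reading of $\mathcal{E}(X)$ coincide. The mixed-state reading is the semantically relevant one when $\mathcal{E}_{kl}(X_k)$ is propagated in the strongest post-condition iteration.

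To make your write-up also cover the cited pure-state definition, add one sentence. Every pure state in $X$ is among the states of your first step, which gives $\subseteq$. Your second step already expresses $\mathrm{supp}\,\mathcal{E}(P_X)$ as $\bigvee_j\mathrm{supp}\,\mathcal{E}(|\psi_j\rangle\langle\psi_j|)$ over a basis of $X$, which gives $\supseteq$.
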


(2) \textbf{Intersection Computation}: In the model checking, we also often need to compute the intersection of two subspaces $V_1$ and $V_2$ of a Hilbert space (or, from a logical perspective, the conjunction of $V_1$ and $V_2$). A solution to this issue is presented as Algorithm~\ref{alg:conj}. The subroutine $\textsc{GramSchmidt}$ implements the standard Gram-Schmidt orthogonalization process, which also serves as a means of computing the union (disjunction) of subspaces. The subroutine \(\textsc{ProjectOn}\) computes the projection of a vector $|v\rangle$ onto the subspace $V_1$. It can be obtained through the following equation, suppose $V_1 = \mathrm{span}(\{|\psi_i\rangle\})$:
$$\textsc{ProjectOn}(V_1, |v\rangle) = \sum_i\langle\psi_i|v\rangle\cdot|\psi_i\rangle$$
which is a combination of CFLOBDD vector manipulations.

{\begin{algorithm}[ht]
\footnotesize
\renewcommand{\algorithmicrequire}{\textbf{Input:}}
\renewcommand{\algorithmicensure}{\textbf{Output:}}
\caption{Intersection of Two Hilbert Subspaces}
\label{alg:conj}
\begin{algorithmic}[1]
\Require Orthonormal bases $V_1, V_2$ of two Hilbert subspaces.
\Ensure A set of orthonormal bases $S$ of $V_1 \cap V_2$.
\State $S\gets$ empty set
\State $V_{\text{diff}} \gets \textsc{GramSchmidt}(V_2\cup V_1)\;\setminus\;V_2$ \Comment{Orthogonalize by the order list $(V_2,V_1)$}
\For{$|v_i\rangle\in V_{\text{diff}}$}
\State $|v_i\rangle\gets \textsc{ProjectOn}(V_1, |v_i\rangle)$ \Comment{A set of vectors that is in $V_1$}
\EndFor
\State $S \gets \textsc{GramSchmidt}(V_{\text{diff}}\cup V_1)\;\setminus\; V_{\text{diff}}$
\Comment{A set of vectors in $V_1$ and orthogonal to $V_{\text{diff}}$}
\State \Return $S$
\end{algorithmic}
\end{algorithm}}

The correctness of Algorithm~\ref{alg:conj} is guaranteed by the following theorem, whose proof is deferred to Appendix \ref{proof-intersection}:
\begin{proposition}\label{thm:conj}
    Algorithm~\ref{alg:conj} computes an orthonormal basis of the intersection $V_1 \cap V_2$ for any given orthonormal bases $V_1, V_2$ of Hilbert subspaces.
\end{proposition}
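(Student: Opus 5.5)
The plan is to identify the subspace spanned by each intermediate set in Algorithm~\ref{alg:conj} and then verify one orthogonality identity. Throughout I read $\textsc{GramSchmidt}$ on an ordered list as returning an orthonormal basis of the span of that list, built so that the first block (e.g.\ $V_2$) is kept as is and the later vectors are orthogonalized against everything before them. Vectors that become zero, i.e.\ linearly dependent ones, are discarded. I read the set difference ``$\setminus$'' as removing the output vectors produced from the first block. Let $P_1$ denote the orthogonal projector onto $\mathrm{span}(V_1)$; by the displayed formula, $\textsc{ProjectOn}(V_1,\cdot)$ is exactly $P_1$.

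\textbf{Step 1.} Let $W=\mathrm{span}(V_{\text{diff}})$ after line~2. By the Gram--Schmidt property, the vectors kept from $V_1$ are orthonormal, orthogonal to $\mathrm{span}(V_2)$, and together with $V_2$ span $\mathrm{span}(V_1)+\mathrm{span}(V_2)$. Hence
$$W=(\mathrm{span}(V_1)+\mathrm{span}(V_2))\cap \mathrm{span}(V_2)^\perp,\qquad \mathrm{span}(V_1)+\mathrm{span}(V_2)=\mathrm{span}(V_2)\oplus W .$$

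\textbf{Step 2.} After the loop, the list becomes $\{P_1 w_i\}$, where the $w_i$ form the orthonormal basis of $W$. Its span is $P_1W\subseteq \mathrm{span}(V_1)$. These vectors need not be orthonormal or even nonzero; this is harmless because line~6 re-orthogonalizes and drops dependent vectors.

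\textbf{Step 3.} Line~6 runs Gram--Schmidt on the list $(P_1W\text{-vectors},V_1)$. The total span is $\mathrm{span}(V_1)$, since $P_1W\subseteq \mathrm{span}(V_1)$. The vectors produced from the $V_1$ block are orthonormal and orthogonal to $P_1W$. Together with a basis of $P_1W$ they span $\mathrm{span}(V_1)$, so $S$ is an orthonormal basis of $\mathrm{span}(V_1)\cap (P_1W)^\perp$.

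\textbf{Step 4 (key identity).} It remains to show
$$\mathrm{span}(V_1)\cap(P_1W)^\perp=\mathrm{span}(V_1)\cap\mathrm{span}(V_2).$$
Take $x\in\mathrm{span}(V_1)$. By self-adjointness of $P_1$ and $P_1x=x$, we have $\langle x,P_1w\rangle=\langle P_1x,w\rangle=\langle x,w\rangle$ for every $w\in W$. So $x\perp P_1W$ if and only if $x\perp W$.

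Since $x\in \mathrm{span}(V_1)\subseteq \mathrm{span}(V_2)\oplus W$, we can write $x=y+z$ with $y\in\mathrm{span}(V_2)$ and $z\in W$. Then $x\perp W$ holds if and only if $z=0$, i.e.\ if and only if $x\in\mathrm{span}(V_2)$. This gives the identity and completes the argument.

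\textbf{Main obstacle.} The step needing most care is not the linear algebra of Step~4 but pinning down the semantics of $\textsc{GramSchmidt}$ and ``$\setminus$'' when the projected vectors are non-orthonormal or linearly dependent. I would state explicitly that dependent vectors are discarded and that the returned basis respects the block order. With that convention, Step~3 follows directly from the standard invariant of Gram--Schmidt: after processing a prefix of the list, the outputs form an orthonormal basis of the span of that prefix.
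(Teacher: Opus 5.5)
Your proof is correct and follows essentially the same route as the paper's. You identify $W$ as the part of $\mathrm{span}(V_1)+\mathrm{span}(V_2)$ orthogonal to $\mathrm{span}(V_2)$ (the paper writes this subspace as $(I-P_B)(A)$), show that $S$ spans $\mathrm{span}(V_1)\cap(P_1W)^\perp$, and use self-adjointness of $P_1$ to reduce to $x\perp W$; your last step via the decomposition $\mathrm{span}(V_2)\oplus W$ is equivalent to the paper's substitution $a=y$, and your explicit conventions for $\textsc{GramSchmidt}$ and ``$\setminus$'' (dropping dependent vectors, and removing the re-orthonormalized first-block outputs in line~6) make precise a point the paper leaves implicit.
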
 

\textbf{Complexity analysis.} Let $n=|L|$ be the number of locations and $d=\dim(\mathcal{H})$ the dimension of the Hilbert space. The cost of CFLOBDD-based operations depends on both the number of nodes in diagrams and the lookup efficiency of unique tables, which are dynamic across practical scenarios. We denote by $\tau_{mv}$ and $\tau_{vv}$ the time complexities of performing a matrix-vector multiplication and an inner product of vectors in CFLOBDD representation, respectively. In the worst case, where structural compression fails and the diagrams degenerate, we have $\tau_{mv}=\mathcal{O}(d^2)$ and $\tau_{vv}=\mathcal{O}(d)$. 

The fixed-point iteration in Algorithm~\ref{alg:wp} may take up to $\mathcal{O}(n\cdot d)$ iterations. In each iteration, the algorithm computes a constant number of conjunctions and disjunctions of subspaces, which are carried out by the Gram-Schmidt procedure. Gram-Schmidt on at most $d$ vectors requires $\mathcal{O}(d^2)$ inner products, yielding $\mathcal{O}(d^2\cdot \tau_{vv}) =\mathcal{O}(d^3)$ time in the worst case. Therefore, the overall complexity of computing weakest pre-conditions (and similarly, strongest post-conditions) is
$
\mathcal{O}(n \cdot d \cdot d^2 \cdot \tau_{vv}) = \mathcal{O}(n\cdot d^3 \cdot \tau_{vv}),$ 
which leads to a worst-case upper bound of $\mathcal{O}(n\cdot d^4)$.
We remark, however, that this upper bound is rarely encountered in practice, as CFLOBDDs often achieve substantial compression, thereby reducing the effective costs of $\tau_{mv}$ and $\tau_{vv}$ by orders of magnitude.

\subsection{Property Specification Interface}
Specifying a qCTL property in QisMC involves three main steps: identifying the program locations of interest, constructing the quantum propositions to be checked, and composing these propositions into a qCTL formula.

\textbf{Identifying the program locations.} QisMC provides three levels of interfaces for identifying and annotating program locations. First, when constructing a Qiskit program, users can directly insert a named marker at a program location by invoking \texttt{qc.mark("name")}, which is provided through our wrapper around Qiskit's \texttt{QuantumCircuit}. Second, QisMC provides built-in location selectors that identify locations with particular structural roles using predefined keywords such as \texttt{leaf} and \texttt{loop}. Finally, for fine-grained control, users can directly refer to and annotate specific locations using their unique location identifiers.

\textbf{Constructing the quantum propositions.} A quantum proposition in QisMC is represented by a subspace specified by a set of basis vectors. QisMC provides multiple ways to construct such basis vectors. In particular, users can directly specify quantum states using strings in a Dirac-notation-like syntax. For example, $|01\mathord{+}\mathord{-}\rangle$ denotes a four-qubit state in which the first two qubits are in the computational-basis states $|0\rangle$ and $|1\rangle$, while the remaining two are in the Pauli-$X$ eigenstates $|+\rangle$ and $|-\rangle$. Alternatively, users can exploit the simulation capability of the CFLOBDD backend to construct a desired state by starting from the all-zero basis state and applying a sequence of unitary gates.

\textbf{Composing qCTL specifications.} Once the relevant program locations and quantum propositions have been specified, they can be composed using the qCTL operators introduced in Section~\ref{sec:qctl-def}. According to Theorem~\ref{thm:reduction}, once the satisfaction of each quantum proposition at each location has been determined, qCTL model checking can be reduced to conventional CTL model checking. Accordingly, QisMC translates the qCTL formula, the structure of the transition system, and the satisfaction information of its atomic propositions into an input model accepted by NuSMV. The resulting CTL model-checking problem is then submitted to NuSMV to obtain the verification result.

\subsection{Counterexample-Guided Debugging}
At last, QisMC invokes NuSMV on the reduced CTL problem. It returns results of true if the properties are satisfied. Otherwise, NuSMV generates a counterexample trace over transition-system states, and QisMC translates this low-level trace back into the corresponding Qiskit program statements using the location-to-statement mapping established during CFG construction, which indicates an executable sequence of program statements leading from the initial program state to the location where the specification is violated. This reflects the methodology of using model checking for software debugging as highlighted~\cite{Clarke2009lecture}.

{\vskip 3pt}

\textbf{Summary}: Finally, the user obtains a quantum-classical transition system annotated with atomic propositions and a qCTL formula. Given these two inputs, the subsequent procedure, including the transformation into a CTL problem and the generation of model checking results, is fully automated through the invocation of the \texttt{modelChecking} method in QisMC.

The techniques introduced in this section, in particular the use of CFLOBDDs together with the algorithm proposed here, enable QisMC to go beyond the typical test cases used in previous quantum model checking work~\cite{feng2015qpmc,Gay2008QMC,do2024symbolic}, such as the BB84 protocol and quantum teleportation. As a result, QisMC can be applied to the debugging of practical quantum programs. In the next section, we analyze the experimental performance of QisMC from several perspectives.

\section{Evaluation}\label{experiments}
To demonstrate and evaluate the functionality and practicality of QisMC, we formulate the following research questions and design corresponding experiments to address them:
\begin{itemize}
    \item \textbf{RQ1.} What classes of properties can QisMC specify and debug beyond existing quantum model checkers?
    \item \textbf{RQ2.} How does QisMC scale compared with existing quantum verification tools?
\end{itemize}
To address RQ1, we first present three illustrative examples in Section~\ref{Sec-Cases} to demonstrate QisMC’s verification capabilities from complementary perspectives. Two of these examples, VQSS and the quantum Bernoulli factory, have not been explored in prior quantum program verification work in this form, and the VQSS case further demonstrates QisMC’s ability to detect a real bug in an existing research prototype. We then compare QisMC with representative quantum model checkers, including QPMC, QTC-Maude, and QMC, from the perspective of tool capabilities in Section~\ref{Sec:cap_analysis}. Finally, Section~\ref{Sec:prop_tax} summarizes the properties verified throughout the paper and develops a property taxonomy to characterize the expressive scope of qCTL.

To address RQ2, Section~\ref{ScaleSection} presents several groups of scalability experiments. These include comparisons with existing quantum model checkers~\cite{feng2015qpmc,do2024symbolic} and a quantum abstract interpretation approach~\cite{quantumAbsInter}, as well as a broader evaluation on quantum programs drawn from two benchmark suites~\cite{chen2022veriqbenchbenchmarkmultipletypes,nation_benchmarking_2025}. The latter experiments cover a wider range of program structures and problem sizes, including dynamic quantum circuits with measurements and classical control flow.

We briefly clarify the choice of the baselines used in this section. Model checking requires exhaustive state-space exploration and counterexample generation, making practical implementations particularly challenging, especially in the quantum setting with exponentially growing state spaces. Consequently, despite extensive theoretical studies, only a few practical quantum model checkers are publicly available. We therefore choose QPMC~\cite{feng2015qpmc} and QTC-Maude~\cite{do2024symbolic} as baselines, since they are, to the best of our knowledge, the only open-source and reproducible quantum model checkers. We additionally implemented translation scripts to convert OpenQASM programs into their input formats. For the remaining approaches~\cite{Gay2008QMC,ying2021model,guan2024measurement}, no publicly available implementations were available for reproduction or experimental evaluation.

All of our experiments were conducted on a Linux server running Ubuntu 22.04.5 LTS, equipped with two Intel Xeon Platinum 8358P CPUs (2.60 GHz, 128 threads in total) and 503 GiB of RAM.

\subsection{RQ1.1: Illustrative Examples}\label{Sec-Cases}

We further illustrate two capabilities that distinguish QisMC from existing quantum model checkers using three representative examples:
\begin{enumerate}
    \item Reasoning about quantum-state properties expressed as subspaces within temporal specifications; and
    \item Temporal verification of quantum while programs.
\end{enumerate}

Subspace properties naturally serve as quantum assertions for specifying properties of quantum states in quantum programs and circuits~\cite{ying2021model}. The following two examples illustrate QisMC's capability of verifying such assertions.

{\vskip 3pt}

\textbf{Grover's algorithm~\cite{Grover96}.} A well-known property of Grover's algorithm is that the quantum state remains within the two-dimensional subspace spanned by the initial state and the target state before and after every Grover iteration. This property has been adopted as a representative subspace assertion in previous verification approaches, including quantum abstract interpretation~\cite{quantumAbsInter} and quantum reachability analysis~\cite{dai2024qreach}. We therefore evaluate QisMC on the Grover benchmarks from VeriQBench~\cite{chen2022veriqbenchbenchmarkmultipletypes} and~\cite{quantumAbsInter}. Notably, the CFLOBDD~\cite{cflobdd} backend of QisMC demonstrates excellent scalability on this benchmark. A detailed scalability evaluation and experimental setup are presented in Section~\ref{ScaleSection}. Specifically, for the program of a Grover iteration with $|\psi\rangle$ as its initial state and $|\phi\rangle$ as its target state, the qCTL formula to be checked is:
$$\forall\Diamond\forall\square (\text{span}(\{|\psi\rangle, |\phi\rangle\})).$$

Quantum subspaces also naturally represent code spaces in quantum error-correcting codes (QECCs). With the growing importance of quantum error correction and fault-tolerant quantum computing, verifying such systems has become an increasingly important problem~\cite{huang2025,fang2024symbolic,Chen2025}. The following example is drawn from a quantum cryptographic protocol employing QECCs. QisMC successfully detects a real bug in this research prototype, demonstrating its practical debugging capability.

\textbf{Verifiable quantum secret sharing (VQSS)}~\cite{crepeau2002secure} is an important topic in quantum cryptography. In this setting, a {dealer} aims to distribute a quantum state among a group of participants. Similar to classical secret sharing, the protocol guarantees that any coalition containing fewer than a certain threshold of participants cannot reconstruct the original quantum state from their shares. At the same time, the protocol must tolerate adversarial behavior. A bounded fraction of participants may act as cheaters and provide incorrect information during the protocol. The goal of the dealer is to ensure that, despite such malicious interference, the honest participants can still reconstruct the correct quantum state and identify the cheating parties.

VQSS protocols can be constructed based on quantum error correction codes. In a simplified version of the protocol, the dealer first encodes the secret state together with several auxiliary uniform superposition states (referred to as verifier states) into a quantum code. The code is chosen such that the number of physical qubits equals the number of participants, and the number of potential cheaters is below the code’s error-correction capability. The dealer then distributes the shares by sending the corresponding qubit of each encoded state to each partner.

To enable decoding and the identification of cheaters, all honest partners apply CNOT gate between his shares of the secret state (as control qubits) and shares of the verifier states (as target qubits). Afterwards, the verifier states are measured and the outcomes are broadcast. With an appropriately chosen quantum code---typically a CSS Code---the resulting measurement outcomes form a correctable classical codeword, while \textbf{the secret state remains unchanged under the above quantum operations}. By detecting errors in the measurement outcomes, the cheaters are revealed.

VQSS serves as a key subroutine in several important multi-party quantum protocols like quantum Byzantine agreement~\cite{ben2005fast, Taherkhani_2018}. In~\cite{Taherkhani_2018}, the authors performed resource estimation and studied the feasibility of the VQSS protocol in a quantum repeater network. A protocol configuration based on the $\llbracket 5,1,3\rrbracket$ code was presented in the paper, as illustrated in Fig.~\ref{procedureVQSS}.

\begin{figure}[htbp]
    \centering
    \begin{subfigure}[b]{0.3\textwidth}
        \includegraphics[width=\linewidth]{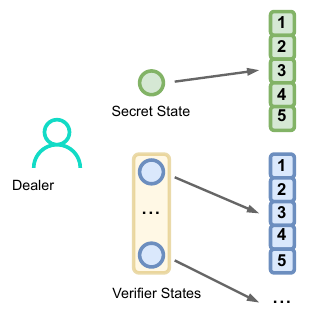}
        \caption{}
        \label{fig:pro1}
    \end{subfigure}
    \hspace{0.03\textwidth}
    \begin{subfigure}[b]{0.32\textwidth}
        \includegraphics[width=\linewidth]{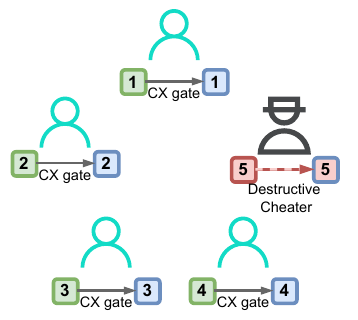}
        \caption{}
        \label{fig:pro2}
    \end{subfigure}
    \begin{subfigure}[b]{0.3\textwidth}
        \includegraphics[width=\linewidth]{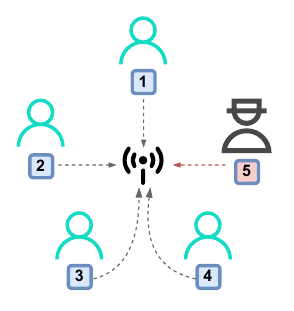}
        \caption{}
        \label{fig:pro3}
    \end{subfigure}
    \caption{The procedure of verifiable quantum secret sharing based on the $\llbracket 5,1,3\rrbracket$ code.}
    \Description{Three side-by-side images showing part of the Verifiable quantum secret sharing process}
    \label{procedureVQSS}
\end{figure}

However, we observe that this configuration does not satisfy the necessary condition described above. In particular, the secret states in the protocol are destroyed by the CNOT operations and subsequent measurements, which prevents the honest partners from correctly reconstructing the secret. To investigate this issue, we implemented the relevant part of the protocol in Qiskit and verified it using QisMC. For the classical propositions, we label the program location immediately after all quantum states are encoded as $enc$, and the final program location as $leaf$. In addition, we record the strongest post-condition computed at location $enc$ as a quantum proposition $\mathcal{C}$. The unsatisfiability of the following formulas obtained by verification indicates that the secret state is altered during the execution of the protocol.
\begin{align*}
    &\forall\square (leaf\rightarrow \mathcal{C})\\
    &\forall\square(enc\rightarrow\forall\,O\,\forall\Diamond \mathcal{C})
\end{align*}
We conclude this section with the above example drawn from a concrete research setting. Although this example does not fully showcase QisMC’s capability to verify programs with while-loops, it demonstrates the ability of \textbf{our approach to detect real bugs}, rather than artificially injected ones. This case study also suggests that QisMC and its underlying theoretical framework have the potential to be applied to a broader range of scenarios.

{\vskip 3pt}

Loops are a fundamental construct of transition systems and are essential for expressing temporal properties over potentially unbounded executions. However, support for quantum \texttt{while} programs remains limited in existing quantum model checkers; to the best of our knowledge, only QPMC explicitly includes examples involving \texttt{while} loops.  The following example features nested \texttt{while} loops and \texttt{if} statements, representing a substantially more complex control-flow structure than existing benchmarks and, to the best of our knowledge, has not previously been formulated as a quantum program for verification.

\textbf{Quantum Bernoulli factory}~\cite{dale_provable_2015} is an important algorithm that embodies provable quantum advantage. Its objective is to simulate a bias Bernoulli test $f(p)$ perfectly by using a finite number of coin flips, given access to a coin with an unknown bias $p$. The quantum version of the algorithm relies heavily on measurement-based control and repeat-until-success constructs. 
We present here an instance of the bias function $f(p)$ which cannot be implemented by classical Bernoulli factory algorithms, but can be easily achieved by a quantum algorithm, that is, 
$f(p) = 4p\cdot(1-p).$
Fig.~\ref{fig:qbfCirc} is the program and circuit graph of quantum Bernoulli factory for $p = 0.2$. The values in classical bits \textit{res} after the program stand for the results of the Bernoulli test: \texttt{01} for the head and \texttt{10} for the tail. We can verify the following temporal properties for this example:
\begin{enumerate}
    \item \emph{Liveness property.} \textit{The state of the working qubit $q1_0$ and $q1_1$ can eventually be in Bell states: $|\Psi^+\rangle=(|01\rangle+|10\rangle)/\sqrt{2},\quad |\Phi^-\rangle = (|00\rangle-|11\rangle)/\sqrt{2}$.}
    $$\exists \Diamond (\mathrm{span}(|\Psi^+\rangle)\lor \mathrm{span}(|\Phi^-\rangle))$$
    \item \emph{Safety property.} \textit{At any moment, if the value of the register `res' is \texttt{10}, it will eventually be in \texttt{10} at the end of the program.}
    $$\forall\square ((p_{10}\cand reached)\rightarrow \cneg\exists(reached\,U\,(reached\cand leaf\cand\cneg p_{10})))$$
\end{enumerate}

\begin{figure}[htbp]
    \centering
    \begin{subfigure}[b]{0.56\textwidth}
        \includegraphics[width=\linewidth]{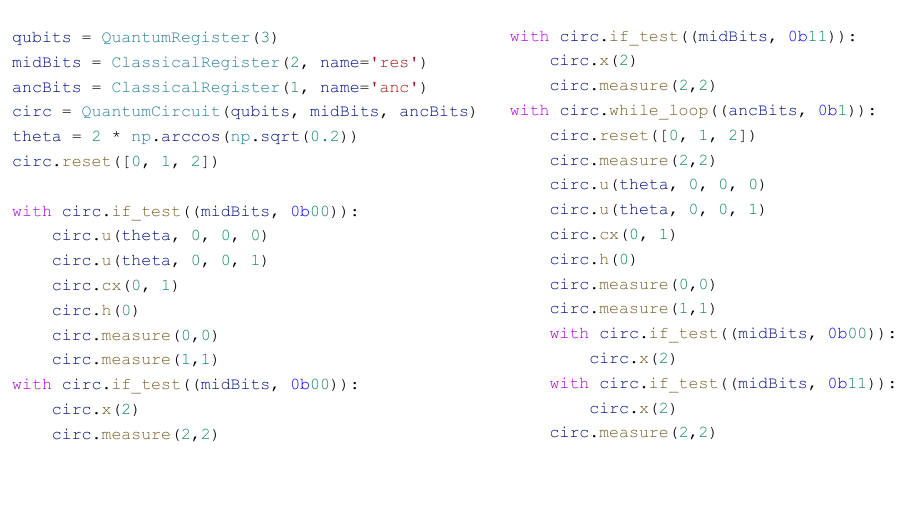}
    \end{subfigure}
    \begin{subfigure}[b]{0.4\textwidth}
        \includegraphics[width=\linewidth]{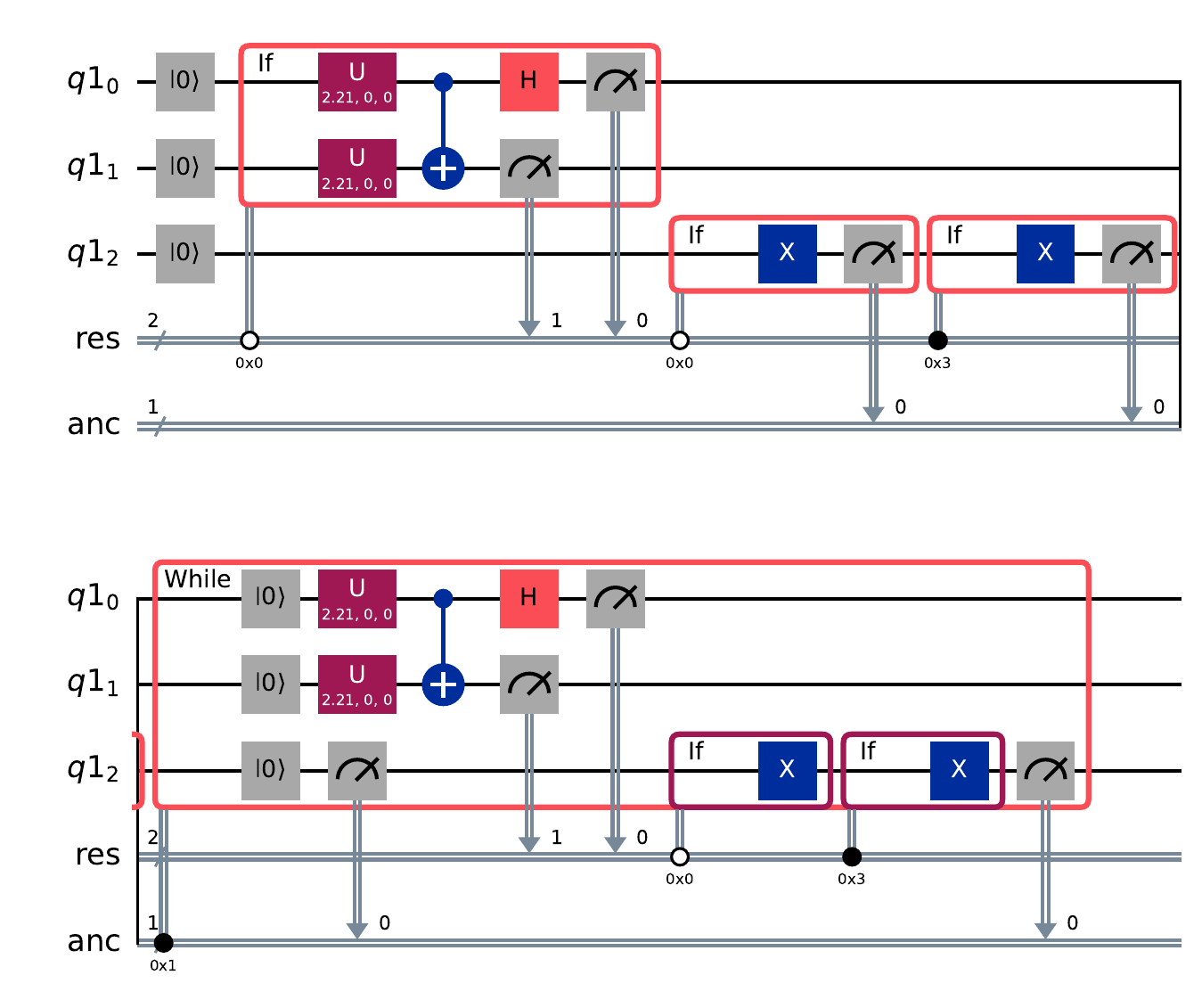}
    \end{subfigure}
    \caption{Qiskit program and circuit for quantum Bernoulli factory $f(p) = 4p\cdot(1-p)$.}
    \Description{Qiskit circuit for quantum Bernoulli factory}
    \label{fig:qbfCirc}
\end{figure}
The atomic propositions are defined as follows:
\texttt{reached} denotes locations with a positive $sp[l]$;
\texttt{p10} represents states where the classical register value is \texttt{10}; and
\texttt{leaf} identifies the leaf nodes of the transition system. 

Verification of the second property reveals an interesting fact that, during the algorithm, 
the outcome \texttt{11} on \texttt{midBits} will never occur, as the outcome \texttt{10} can never appear as an intermediate result.
To demonstrate the debugging capability of QisMC, we deliberately inject a minor fault into the program by replacing `\texttt{0b00}' with `\texttt{0b01}' in the first if-condition of the while-loop. 
In this case, QisMC not only reports a failed model checking result but also provides a clear counterexample pinpointing the source of the bug: a trace from the initial location to the terminal, consisting of 46 locations with their corresponding satisfied atomic propositions.

\subsection{RQ1.2: Capability Analysis of QisMC}\label{Sec:cap_analysis}
\newcommand{\cmark}{\ding{51}}
\newcommand{\xmark}{\ding{55}}
\begin{table}[htbp]
  \centering
  \caption{Capability comparison among quantum model checking approaches.}
  \label{tab:capability-comparison}
  \small
  \setlength{\tabcolsep}{5pt}
  \renewcommand{\arraystretch}{1.12}
  \begin{tabular}{llcccc}
    \toprule
    \textbf{Dimension}
      & \textbf{Capability}  & \textbf{QisMC}  & \textbf{QPMC}~\cite{feng2015qpmc} & \textbf{QTC-Maude}~\cite{do2024symbolic} & \textbf{QMC}~\cite{Gay2008QMC} \\
    \midrule

    \multirow{3}{*}{\textbf{Programming}}
      & Program input  & Qiskit  & PRISM & Maude  & DSL \\
      & Auto-construction & \cmark & \xmark  & \xmark & \xmark \\
      & Control-flow & General & General  & Limited & Limited \\
    \midrule

    \multirow{4}{*}{\textbf{Specification}}
      & Base Temporal logic  & CTL & PCTL & LTL  & CTL \\
      & Quantum AP & Subspaces  & \xmark & State equivalence & dEQPL \\
      & Classical AP & \cmark & \cmark & \cmark & \cmark \\
      & Probabilistic & \xmark & \cmark & \xmark & \xmark \\
    \midrule

    \multirow{2}{*}{\textbf{Verification}}
      & Symbolic & \cmark & \xmark & \cmark & \cmark \\
      & Diagnostic feedback
      & \cmark & Limited  & Limited  & \xmark \\
    \bottomrule
  \end{tabular}
\end{table}
Table~\ref{tab:capability-comparison} compares the capabilities of QisMC with three representative quantum model checkers: QPMC~\cite{feng2015qpmc}, QTC-Maude~\cite{do2024symbolic}, and QMC~\cite{Gay2008QMC}. Since no usable implementation of QMC is publicly available, its capabilities are inferred solely from the original paper~\cite{Gay2008QMC,towards2007}. The comparison is organized along three dimensions: programming, specification, and verification.

QisMC is the only tool that natively supports Qiskit programs and automatically constructs the corresponding transition systems for verification. This automation provides a unified workflow, in which quantum programs, transition systems, and the model checking algorithm are all expressed within the same programming framework, eliminating manual translation between different modeling languages. Regarding control flow, both QisMC and QPMC support quantum programs containing branches, loops, and classical feedback. QTC-Maude neither supports loops in its underlying formalism nor in its implementation, while QMC only supports final measurements.

The second dimension concerns the specification language. ``Base temporal logic'' refers to the classical temporal logic upon which each quantum temporal logic is defined. QPMC is based on probabilistic temporal logic, whereas the remaining approaches are extensions of LTL or CTL. All four tools support classical atomic propositions, but differ in their treatment of quantum atomic propositions. The expressiveness of qCTL is analyzed in detail in the following subsection.

Finally, we compare the underlying verification techniques. Here, ``symbolic'' refers to whether symbolic representations are employed for quantum states, quantum operators, or quantum reasoning, rather than explicit vector or matrix representations. QisMC employs decision diagrams for symbolic quantum reasoning. QTC-Maude instead performs symbolic deduction through rewriting rules in Maude, while QMC is restricted to the Clifford gate set, where classical simulation techniques are directly applicable~\cite{gottesman98}. Finally, among all the compared tools, only QisMC reports counterexamples at the program-statement level, providing precise debugging information that is essential for practical model checking.

\subsection{RQ1.3: Property Taxonomy of qCTL}\label{Sec:prop_tax}
In this section, we will explore the scope of qCTL in a theoretical perspective, and summarize and classify the qCTL specifications presented in the previous sections, thereby enabling a clearer comparison with existing quantum model checkers.

{\small
\begin{table*}[htbp]
    \centering
    \caption{Taxonomy of qCTL specifications}
    \label{tab:prop_taxonomy}
    \begin{tabular}{lcc}
        \toprule
        \textbf{Properties} & \textbf{qCTL Specification} & \textbf{Example} \\
        \hline
        \textbf{Static Subspaces} & $\mathrm{span}(|\Psi^+\rangle)\lor \mathrm{span}(|\Phi^-\rangle)$ & QBF \\
        \textbf{Classical Propositions Only} & $\forall\Diamond \texttt{leaf}$ & QBF / VQSS\\
        \textbf{Quantum-Derived Propositions} & $\cneg\exists(\texttt{reached}\,U\,(\texttt{reached}\cand \texttt{leaf}\cand\cneg p_{10}))$ & QBF \\
        \textbf{Quantum Invariance} & $\forall\square (\texttt{leaf}\rightarrow \mathcal{C})$ & VQSS \\
        \textbf{Quantum Reachability} & $\exists \Diamond \neg\mathcal{H}$ & RUS \\
        \textbf{Quantum Persistence} & $\forall\Diamond\forall\square (\text{span}(\{|\psi\rangle, |\phi\rangle\}))$ & Grover \\
        \textbf{Quantum Recurrence} & $\forall\square\forall\Diamond\mathcal{C}$ & VQSS \\
        \bottomrule
    \end{tabular}
\end{table*}
}

Table~\ref{tab:prop_taxonomy} summarizes the qCTL specifications used throughout our case studies and classifies them according to their expressiveness. The taxonomy ranges from static subspace assertions to temporal properties that combine classical and quantum atomic propositions. While some categories can be partially expressed by previous approaches~\cite{feng2013model,do2024symbolic}, others are unique to qCTL. As a trade-off, qCTL does not support probabilistic specifications, which remain the main advantage of QPMC~\cite{feng2015qpmc}.

Static properties correspond to quantum assertions. When only the initial location is annotated, their semantics reduce to conventional subspace assertions used in previous work~\cite{quantumAbsInter}. More generally, qCTL allows assertions to be attached to arbitrary program locations, where the consistency condition induces upper and lower bounds on admissible states. This generalization distinguishes qCTL from previous assertion-based approaches.

The remaining categories introduce temporal reasoning. The simplest form consists of temporal properties over manually defined classical atomic propositions. For example, the specification $\forall\Diamond\texttt{leaf}$ is used to detect non-terminating executions corresponding to unreachable leaf locations. The next category consists of temporal properties over \emph{quantum-derived classical propositions}. Although their atomic propositions are still Boolean, their truth values are obtained through quantum reasoning. In our quantum Bernoulli factory example, the proposition \texttt{reached} is derived from the bidirectional fixed-point computation and is semantically equivalent to the Boolean predicate $\mathrm{Dim}(sp[\mathrm{loc}])>0$. Similar specification patterns are common in previous quantum model checking approaches. For instance, QTC-Maude~\cite{do2024symbolic} employs predicates such as $Prob>0$, whose values are likewise computed by quantum analysis before being used in temporal reasoning.

The final category represents the key contribution of qCTL. These specifications uniformly combine classical atomic propositions, quantum atomic propositions, temporal connectives, and path quantifiers within a single specification language. We illustrate four representative classes of temporal properties: quantum invariance, quantum reachability, quantum persistence (eventually always), and quantum recurrence (always eventually). Previous work has considered individual instances of some of these properties (e.g., reachability~\cite{dai2024qreach}), but lacks a unified temporal logic capable of expressing them within a single framework and supporting practical debugging. Each property category is demonstrated by one of our case studies, including RUS, Grover's algorithm, and VQSS. The recurrence example is motivated by repeatedly executed VQSS protocols, a common assumption in long-running quantum error-correction and fault-tolerant quantum computing systems.

{\vskip 3pt}

\textbf{Answer to RQ1.} 
The former sections provide positive answers to \textbf{RQ1} from three perspectives. Through illustrative examples, QisMC verifies quantum programs and specifications beyond the capabilities of previous quantum model checkers, including the detection of a real bug reported in the literature. From the perspective of tool capabilities, QisMC is the only existing approach that combines native Qiskit support, automatic transition-system construction, symbolic verification, and statement-level counterexample generation. From the perspective of specification expressiveness, qCTL not only generalizes static subspace assertions, but also provides a unified specification language that seamlessly combines classical and quantum atomic propositions within temporal properties. Such unified expressiveness is absent from previous quantum model checking approaches. 

Finally, beyond its expressive power, QisMC also demonstrates superior scalability, which we evaluate in the next subsection.

\subsection{Scalability Benchmarks}\label{ScaleSection}
To evaluate the scalability of QisMC across different program types and scales, as well as its performance relative to existing approaches, we design our experiments in three stages.
\begin{enumerate}
    \item We compare the efficiency of QisMC, QTC-Maude~\cite{do2024symbolic}, and QPMC~\cite{feng2015qpmc} on a common set of benchmarks.
    \item We compare the efficiency of QisMC and~\cite{quantumAbsInter}, which proposed a quantum program assertion verifier. Both the tools used quantum subspaces as assertions on quantum states.
    \item We further evaluate QisMC on a broader collection of benchmarks from VeriQBench~\cite{chen2022veriqbenchbenchmarkmultipletypes} and Benchpress~\cite{nation_benchmarking_2025}, which include a variety of representative quantum algorithms, some featuring measurements and classical control flow.
\end{enumerate}
All benchmarks are provided as OpenQASM~2 programs and loaded using the standard Qiskit frontend, resulting in a uniform verification workflow. Some benchmarks, such as Grover’s algorithm (Section~\ref{Sec-Cases}), require specific verification procedures, whereas the remaining benchmarks follow a common workflow. Under this workflow, QisMC assumes the all-zero initial state, injects a randomly generated Pauli error into the program, and verifies whether the erroneous program satisfies the property induced by the correct output of the original program. Although this setup is not intended to model a realistic debugging scenario, it provides a simple, automated, and consistent evaluation protocol that can be applied uniformly across a diverse collection of benchmarks.

\textbf{Comparison with previous quantum model checkers.} We selected two sets of benchmarks, Grover's algorithm~\cite{Grover96} and Bernstein--Vazirani algorithm~\cite{BValg} from VeriQBench for the evaluation. These two benchmarks have a simple gate set of gates $\{X, CNOT, H, CCX\}$, which is easy to be transformed to the input form of QTC-Maude's and QPMC's.

The Grover benchmark contains a single Grover iteration with the all-one basis state as the target state. Following the subspace property introduced in Section~\ref{Sec-Cases}, the verification target is the two-dimensional subspace $\mathrm{Span}(|\mathord{+}\mathord{+}\mathord{+}01\rangle, |11101\rangle)$ (5-qubit program case as an example), where the first three qubits are work qubits and the remaining two are ancilla qubits. For the Bernstein--Vazirani (BV) benchmark, QisMC follows the automated debugging workflow described above. In all experiments, a randomly generated Pauli error is injected to evaluate the corresponding debugging capability.

For QPMC and QTC-Maude\footnote{The toolkits of QPMC and QTC-Maude can be found at \href{https://iscasmc.ios.ac.cn/tool/qmc/}{https://iscasmc.ios.ac.cn/tool/qmc/} and \href{https://doi.org/10.5281/zenodo.10783951}{https://doi.org/10.5281/zenodo.10783951}}, we evaluate only simple properties --- program termination, since differences in programming workflows and specification languages make it difficult to reproduce an equivalent debugging setup across all tools. Nevertheless, even under this simplified evaluation, both tools exhibit clear scalability limitations. Specifically, QTC-Maude requires 229\,s and more than 4.7 million Maude rewrites to verify the 17-qubit Grover benchmark, and 1515\,s with over 7.93 million rewrites for the 11-qubit BV benchmark. QPMC supports more general control flow and probabilistic temporal logic, but on these benchmarks it successfully verifies only instances with at most four qubits; larger instances terminate with runtime exceptions.

\begin{figure}[htbp]
    \centering
    \begin{subfigure}[b]{0.47\textwidth}
        \centering
        \includegraphics[width=\linewidth]{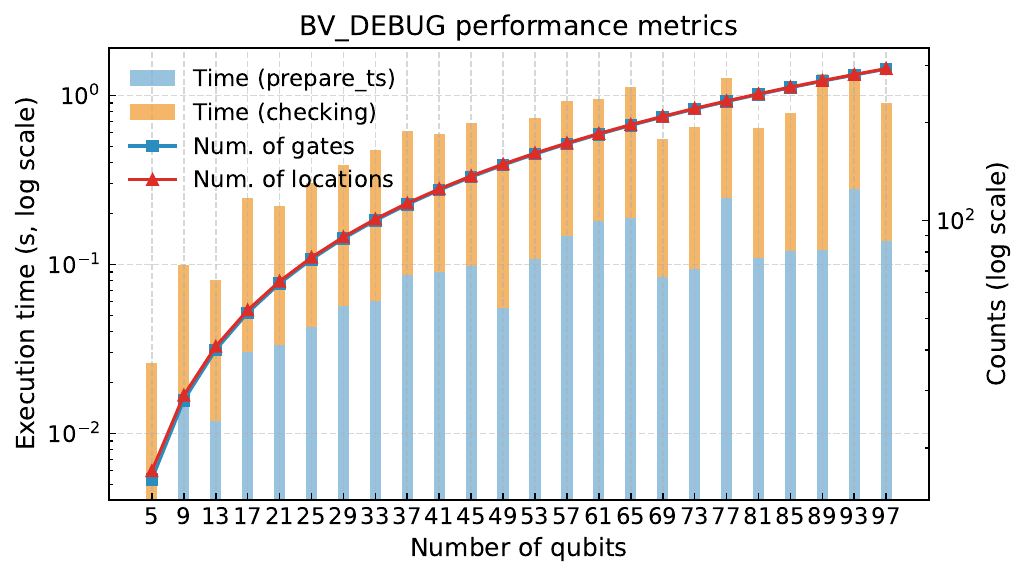}
        \label{fig:scale_bv}
    \end{subfigure}
    \hspace{0.03\textwidth}
    \begin{subfigure}[b]{0.47\textwidth}
        \centering
        \includegraphics[width=\linewidth]{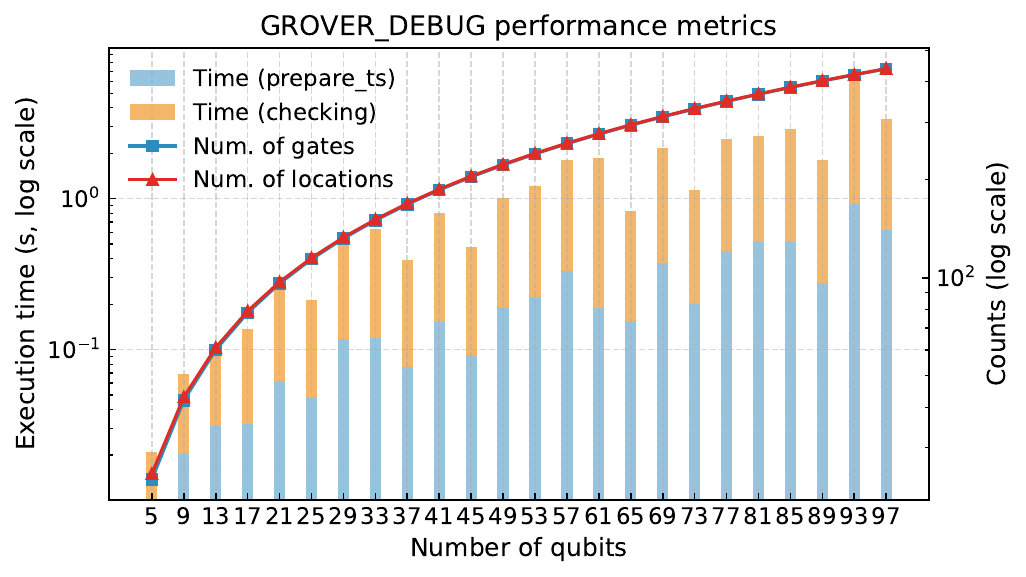}
        \label{fig:scale_grover}
    \end{subfigure}
    \caption{Performance of QisMC on Grover and BV benchmarks of VeriQBench. The execution time and number of locations are recorded. The $y$-axis is in log-scale.}
    \Description{Performance of QisMC on different quantum algorithms.}
    \label{fig:two_algorithms}
\end{figure}

The evaluation results of QisMC are recorded in Fig.~\ref{fig:two_algorithms}. It is shown that QisMC can verify quantum programs with up to 100 qubits within seconds on these two sets of benchmarks. We divide the runtime into the time of building transition systems (prepare\_ts) and the time of checking. Moreover, we have also recorded the number of gates and locations of transition systems in the graph. Because there is no measurements or control-flows in the benchmarks, the number of gates and locations are the same on each case.

Injecting a random Pauli error does not necessarily result in a specification violation. Among the 49 Grover benchmarks, 7 are verified as \texttt{True}; similarly, 28 out of the 99 BV benchmarks also satisfy their specifications after error injection.
For the BV benchmarks, whose circuits consist entirely of Clifford gates, we independently validate these cases using the Stim stabilizer simulator~\cite{gidney2021stim}. All 28 \texttt{True} instances are confirmed to preserve the correct program output. More specifically, in each case the quantum state immediately before the injected error lies in the positive eigenspace of the corresponding Pauli operator, rendering the output state unchanged. Consequently, the specification remains satisfied, and the verification results reported by QisMC are consistent with the simulator.

These results indicate that QisMC reports only errors that genuinely violate the specified property, rather than treating every injected fault as a verification failure.

\textbf{Comparison with quantum abstract interpretation~\cite{quantumAbsInter}.} The quantum abstract interpretation approach of~\cite{quantumAbsInter} aims to scale assertion verification for quantum programs and provides an implementation based on a Java matrix library. Its abstract interpretation framework is built upon quantum subspaces, which also play a central role in QisMC. Therefore, the evaluation in~\cite{quantumAbsInter}, particularly its Grover's search benchmarks, provides a relevant basis for comparison.

The two approaches nevertheless use subspaces for different purposes: \cite{quantumAbsInter} focuses on subspace-based Galois connections and the corresponding abstraction and concretization functions, whereas QisMC uses subspaces primarily as atomic propositions for logical specifications. Rather than comparing their different verification frameworks directly, we therefore focus on their common task of reasoning about quantum programs and checking subspace properties. Specifically, we investigate the advantages and limitations of QisMC's decision-diagram-based symbolic backend compared with the matrix-based implementation of~\cite{quantumAbsInter}.

{\footnotesize
\begin{table*}[htbp]
  \centering
  \caption{Results on the GHZ and Grover benchmarks of~\cite{quantumAbsInter}}
  \label{res:QAI}
  \begin{tabular}{llcccllccc}
    \toprule
    \multicolumn{5}{c}{\textbf{Grover}} &
    \multicolumn{5}{c}{\textbf{GHZ}} \\
    \cmidrule(lr){1-5} \cmidrule(lr){6-10}
    \textbf{Program} & \textbf{\#Qubits} & \textbf{\#Gates} &
    \textbf{Sat.} & \textbf{Time (s)} &
    \textbf{Program} & \textbf{\#Qubits} & \textbf{\#Gates} &
    \textbf{Sat.} & \textbf{Time (s)} \\
    \midrule
    grover-4          & 7   & 58   & True  & 0.061  &
    ghz-50            & 50  & 149  & True  & 0.542 \\
    grover-8          & 15  & 118  & True  & 0.342  &
    ghz-100           & 100 & 299  & True  & 1.079 \\
    grover-16         & 31  & 238  & True  & 3.112  &
    ghz-150           & 150 & 449  & True  & 1.639 \\
    grover-32         & 63  & 478  & --    & timeout &
    ghz-200           & 200 & 599  & True  & 2.208 \\
    grover-32-linear  & 64  & 482  & True  & 2.641  &
    ghz-250           & 250 & 749  & True  & 1.512 \\
    grover-64         & 127 & 958  & --    & timeout &
    ghz-300           & 300 & 899  & True  & 1.603 \\
    grover-64-linear  & 128 & 962  & True  & 12.179 &
                       &     &      &       &       \\
    grover-128-linear & 256 & 1922 & False & 28.533 &
                       &     &      &       &       \\
    grover-150-linear & 300 & 2252 & False & 38.056 &
                       &     &      &       &       \\
    \bottomrule
  \end{tabular}
\end{table*}
}
We evaluated QisMC on two benchmark families from~\cite{quantumAbsInter}, with all results reported in Table~\ref{res:QAI}. As a reference, we also reproduced the experiments of~\cite{quantumAbsInter}: its implementation required 101\,s and 800\,s for \texttt{grover-8} and \texttt{grover-16}, respectively, and 39\,s and 318\,s for \texttt{ghz-50} and \texttt{ghz-100}. QisMC generally demonstrates a substantial performance advantage on these benchmarks.

For the GHZ benchmarks, the verification assertion is to check whether the output state lies in the subspace $\mathrm{Span}(\{|0\cdots0\rangle, |1\cdots1\rangle\})$. QisMC successfully verifies all six instances reported in~\cite{quantumAbsInter} with high efficiency.

The Grover benchmarks use the all-zero basis state as the search target, in contrast to the all-one target used in the previous experiments. They also differ in the implementation of the oracle over ancilla qubits and in gate ordering. In particular, \cite{quantumAbsInter} provides two semantically equivalent implementations that organize the oracle in tree and linear structures, respectively. QisMC exhibits a notable sensitivity to this structural difference. For the tree-structured instances with $\#\mathrm{Qubits}\geq 63$, verification exceeds our 3200\,s timeout, whereas the corresponding linear-structured instances remain efficient. Our investigation shows that the intermediate decision diagrams generated for the tree-structured circuits grow exponentially, while this blowup does not occur for the linear variants. This result exposes a limitation of the CFLOBDD backend: its efficiency can be highly sensitive to the structural organization of the circuits.

The Grover benchmarks also reveal a numerical precision issue for the two largest programs, each containing more than 250 qubits, where QisMC fails to establish the expected subspace property. We attribute these failures to the accumulation of floating-point errors during quantum-state reasoning. We have adopted several numerical-error mitigation techniques, including higher-precision numeric types; however, without an exact symbolic representation of numerical values~\cite{Matthew2018Towards}, such errors cannot be eliminated entirely. This remains a limitation of the current implementation.

\textbf{Scalability evaluation on VeriQBench~\cite{chen2022veriqbenchbenchmarkmultipletypes} and Benchpress~\cite{nation_benchmarking_2025}.} To evaluate QisMC on a broader range of quantum algorithms, circuits, and programs, we conduct an extensive evaluation on the dynamic programs from VeriQBench~\cite{chen2022veriqbenchbenchmarkmultipletypes} and the medium-sized circuits from Benchpress~\cite{nation_benchmarking_2025}. The VeriQBench suite includes two families of distributed quantum computing (DQC) programs, implementing phase estimation~\cite{DQC_PE21} and QFT~\cite{DQC_QFT1996}, respectively. Both implementations rely on intermediate measurements and classical-bit communication, thereby introducing classical variables and control flow into the benchmarks. For comparison, we also evaluate their conventional, purely unitary counterparts.

Figure~\ref{fig:VeriQ-bench} presents the results for these four benchmark families. They reach similar maximum problem sizes, while their verification time grows exponentially with program size, exhibiting substantially worse scalability than the benchmarks evaluated before. In the clean setting, the DQC and non-DQC versions exhibit comparable verification performance, suggesting that measurements and classical control do not by themselves introduce a significant performance bottleneck for QisMC. After a random Pauli error is injected, however, the DQC versions become noticeably slower than their non-DQC counterparts. This observation suggests that the presence of control flow makes QisMC's model checking procedure more sensitive to perturbations introduced by random Pauli errors.

\begin{figure}[htbp]
    \centering
    \includegraphics[width=0.8\linewidth]{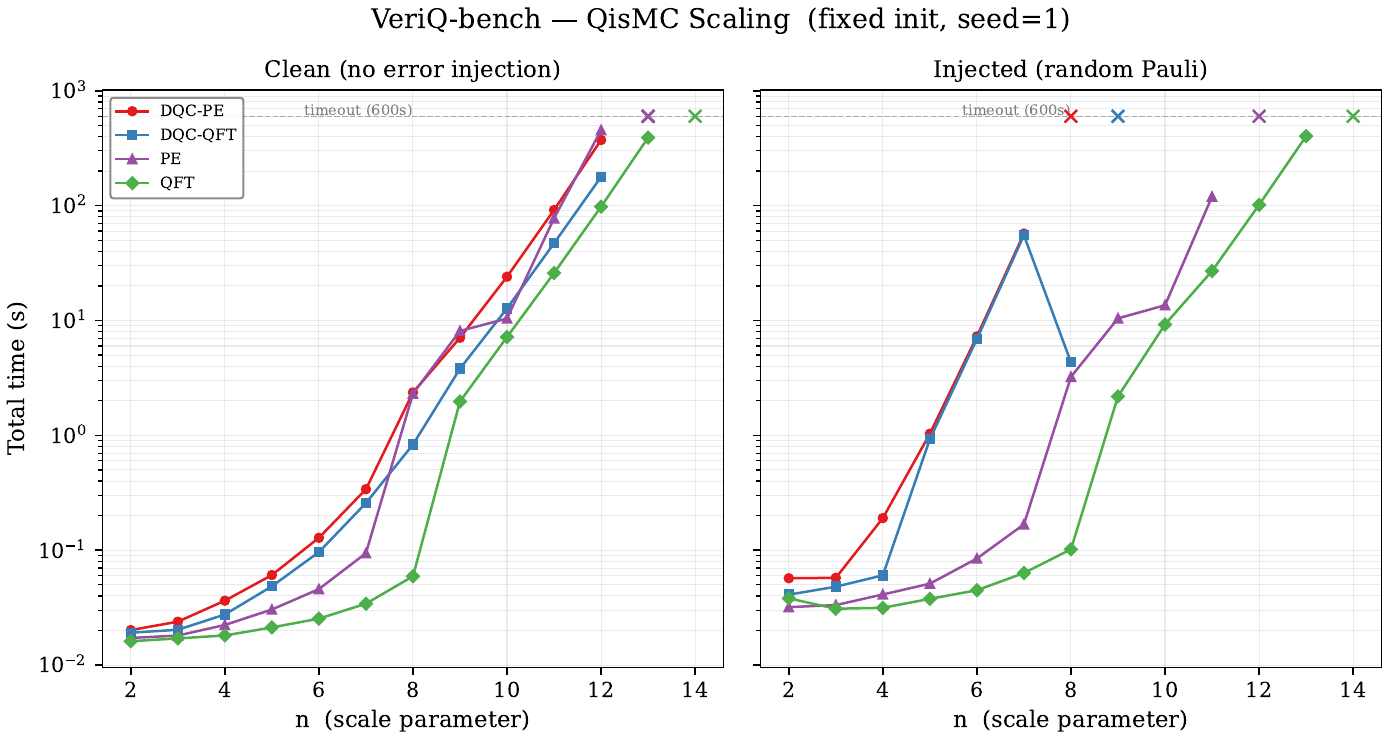}
    \caption{Scaling experimental results on typical dynamic quantum circuits and their combinational versions of VeriQBench~\cite{chen2022veriqbenchbenchmarkmultipletypes}. The debugging procedure for error-injected benchmarks is to compare the output quantum states with clean versions.}
    \label{fig:VeriQ-bench}
\end{figure}

The Benchpress-medium suite contains 40 benchmarks, covering a broader range of quantum circuits from 10 to 30 qubits. We evaluate QisMC on the complete suite using the automated workflow described above, where a random Pauli error is injected and the resulting output is checked against the property derived from the original output. Table~\ref{tab:benchpress} reports the complete results. The performance of QisMC is highly benchmark-dependent: most instances can be verified efficiently, whereas 11 instances exceed the timeout limit. Among these timeout cases, the two \texttt{bwt} instances contain more than 100,000 gates. The timeout instances from \texttt{dnn}, \texttt{ising}, \texttt{swap\_test}, \texttt{knn}, and \texttt{wstate} contain a large number of parameterized quantum gates with floating-point values, such as $R_z(0.61547971)$. Such numerical parameters reduce the sharing and compression effectiveness of decision diagrams for representing quantum states, resulting in substantially higher verification costs. The timeout of \texttt{qft\_n18} is consistent with our observations on the VeriQBench experiments, where QFT exhibits exponential growth in both decision-diagram size and verification time.

{\begin{table*}[htbp]
\footnotesize
  \caption{Results on Benchpress}
  \label{tab:benchpress}
  \begin{tabular}{lrrrrr}
\toprule
Program & \#Qubit & \#Gates & Sat & Prep. Time (s) & Veri. Time (s) \\
\hline
bigadder\_n18/bigadder\_n18\_transpiled & 18 & 340 & False & 0.347 & 0.212 \\
bv\_n14/bv\_n14 & 14 & 57 & False & 0.034 & 0.024 \\
bv\_n14/bv\_n14\_transpiled & 14 & 110 & False & 0.086 & 0.063 \\
bv\_n19/bv\_n19 & 19 & 77 & False & 0.060 & 0.046 \\
bv\_n19/bv\_n19\_transpiled & 19 & 150 & False & 0.155 & 0.117 \\
bwt\_n21/bwt\_n21 & 21 & 112833 &  & timeout & timeout \\
bwt\_n21/bwt\_n21\_transpiled & 21 & 452047 &  & timeout & timeout \\
cat\_state\_n22/cat\_state\_n22 & 22 & 46 & False & 0.049 & 0.041 \\
cat\_state\_n22/cat\_state\_n22\_transpiled & 22 & 48 & False & 0.037 & 0.026 \\
cc\_n12/cc\_n12 & 12 & 62 & False & 0.090 & 0.061 \\
cc\_n12/cc\_n12\_transpiled & 12 & 130 & False & 0.195 & 0.127 \\
dnn\_n16/dnn\_n16\_transpiled & 16 & 2848 &  & timeout & timeout \\
gcm\_n13/gcm\_h6 & 13 & 3150 & False & 16.892 & 8.509 \\
ghz\_state\_n23/ghz\_state\_n23 & 23 & 48 & False & 0.040 & 0.030 \\
ghz\_state\_n23/ghz\_state\_n23\_transpiled & 23 & 50 & False & 0.040 & 0.029 \\
ising\_n26/ising\_n26 & 26 & 307 &  & timeout & timeout \\
ising\_n26/ising\_n26\_transpiled & 26 & 203 &  & timeout & timeout \\
knn\_n25/knn\_n25 & 25 & 39 &  & timeout & timeout \\
knn\_n25/knn\_n25\_transpiled & 25 & 343 &  & timeout & timeout \\
multiplier\_n15/multiplier\_n15 & 15 & 74 & False & 0.032 & 0.023 \\
multiplier\_n15/multiplier\_n15\_transpiled & 15 & 564 & False & 0.571 & 0.309 \\
multiply\_n13/multiply\_n13 & 13 & 22 & False & 0.008 & 0.006 \\
multiply\_n13/multiply\_n13\_transpiled & 13 & 124 & False & 0.074 & 0.054 \\
qec9xz\_n17/qec9xz\_n17 & 17 & 62 & False & 0.064 & 0.046 \\
qec9xz\_n17/qec9xz\_n17\_transpiled & 17 & 104 & False & 0.115 & 0.091 \\
qf21\_n15/qf21\_n15 & 15 & 80 & False & 24.436 & 3.790 \\
qf21\_n15/qf21\_n15\_transpiled & 15 & 366 & False & 69.000 & 24.663 \\
qft\_n18/qft\_n18 & 18 & 802 &  & timeout & timeout \\
qft\_n18/qft\_n18\_transpiled & 18 & 838 &  & timeout & timeout \\
qram\_n20/qram\_n20 & 20 & 46 & False & 0.026 & 0.020 \\
qram\_n20/qram\_n20\_transpiled & 20 & 351 & False & 0.370 & 0.247 \\
sat\_n11/sat\_n11 & 11 & 96 & False & 0.107 & 0.083 \\
sat\_n11/sat\_n11\_transpiled & 11 & 736 & False & 2.229 & 1.189 \\
seca\_n11/seca\_n11 & 11 & 81 & False & 0.149 & 0.104 \\
seca\_n11/seca\_n11\_transpiled & 11 & 293 & False & 1.174 & 0.543 \\
square\_root\_n18/square\_root\_n18 & 18 & 559 & False & 2.100 & 0.978 \\
square\_root\_n18/square\_root\_n18\_transpiled & 18 & 2788 & False & 35.421 & 29.484 \\
swap\_test\_n25/swap\_test\_n25\_transpiled & 25 & 367 &  & timeout & timeout \\
wstate\_n27/wstate\_n27 & 27 & 134 & False & 0.736 & 0.372 \\
wstate\_n27/wstate\_n27\_transpiled & 27 & 237 &  & timeout & timeout \\
\bottomrule
\end{tabular}

\end{table*}}

\textbf{Answer to RQ2.} 
Our evaluation shows that QisMC is applicable to a broad range of quantum programs and achieves high efficiency and scalability on several important benchmark families, including particular implementations of Grover's search, the BV algorithm, and GHZ-state preparation. In comparison with existing quantum model checkers, QisMC also demonstrates a substantial efficiency advantage and scales to considerably larger instances on the common benchmarks evaluated in our experiments. Its performance, however, degrades on several characteristic classes of programs: algorithms inherently unfavorable to the current decision-diagram representation, such as QFT; circuits with unfavorable gate orderings or topologies, such as the tree-structured Grover benchmarks; circuits containing many floating-point gate parameters; and very large circuits where accumulated numerical errors affect verification. Taken together, these results demonstrate the scalability advantage of QisMC over existing quantum model checkers, while also revealing that the performance of its CFLOBDD-based backend remains highly dependent on program structure and numerical characteristics.

\paragraph{Limitations and future work}
QisMC currently has several limitations that motivate future work. First, qCTL does not support probabilistic properties due to its current subspace-based semantics for quantum-state properties. Although QisMC demonstrates better scalability than QPMC on the common benchmark programs evaluated in Section~\ref{ScaleSection}, extending qCTL and QisMC with probabilistic reasoning remains an important direction. Second, our experiments show that the CFLOBDD backend can be sensitive to circuit topology and parameterized gates. Moreover, its reliance on high-precision floating-point arithmetic can lead to accumulated numerical errors for very large programs. In future work, we plan to explore exact algebraic representations for important classes of quantum programs, such as Clifford+$T$ circuits, to improve both scalability and numerical robustness.

\section{Conclusion}\label{conclusion}
We introduced QisMC, the first quantum model checker designed for debugging Qiskit programs. On the theoretical side, we proposed qCTL, a quantum temporal logic grounded in Birkhoff-von Neumann logic, and defined its satisfaction semantics over quantum-classical transition systems that model the behaviors of quantum programs. We then developed a bidirectional fixed-point algorithm for computing the satisfaction of quantum propositions. By reasoning about subspace closures rather than explicitly enumerating quantum states, the algorithm avoids direct exploration of the continuous quantum state space. Based on these results, we further established a reduction from qCTL model checking to conventional CTL model checking.

On the practical side, QisMC implements an end-to-end counterexample-guided debugging workflow, extending the classical model-checking paradigm of debugging~\cite{Clarke2009lecture} to quantum programs. Our case studies and property taxonomy demonstrate that qCTL supports a broad spectrum of specifications, ranging from static subspace assertions to temporal properties that uniformly combine classical and quantum atomic propositions. Through extensive experiments, QisMC demonstrates practical advantages over existing quantum model checkers in automated program modeling, specification, diagnostic feedback, and scalability, while our evaluation also identifies limitations of the current symbolic backend. In future work, we plan to extend QisMC with probabilistic property verification and explore more scalable and numerically robust symbolic representations for quantum reasoning.


\newpage

\bibliographystyle{ACM-Reference-Format}
\bibliography{samples/reference}

\newpage
\appendix
\section{Preliminaries}\label{apd:prelim}
In this section, we will give a brief introduction to quantum computation~\cite{nielsen2010quantum}. The interested readers may refer to~\cite{ying2021model} for more information of the quantum model checking theory.

\subsection{Quantum State and Quantum Gate.} The basic element of quantum computation is a qubit. Unlike its classical counterpart, a qubit can be represented as a superposition of computational basis states $|0\rangle$ and $|1\rangle$ with complex amplitudes and a modulus of 1, or as a vector in the two-dimensional complex space $\mathbb{C}^2$: $$|\phi\rangle = \alpha|0\rangle+\beta|1\rangle = \begin{bmatrix}
    \alpha \\
    \beta
\end{bmatrix}$$
where $|\alpha|^2+|\beta|^2 = 1$. In quantum computation, a pure quantum state consists of $n$ qubits, which is the tensor product of all its qubits' vector spaces, that is, a $2^n$ dimensional complex space $(\mathbb{C}^2)^n$. A quantum state of $n$ qubits can be written in Dirac notation as a summation of computational bases:
$$|\phi\rangle = \sum_{x\in\{0,1\}^n}\alpha_x|x\rangle = \sum_{x_1,\dots,x_n}\alpha_{x_1,\dots,x_n}|x_1\cdots x_n\rangle$$
where its norm $\Vert|\phi\rangle\Vert = \sqrt{\sum_x|\alpha_x|^2} = 1$. The quantum evolution of a closed system in a pure state can be mathematically characterized as a unitary matrix in a $2^n$-dimensional complex space. In practical quantum computing, unitary quantum gates acting on localized (one or two) qubits are typically used to construct the desired quantum evolution, and their statistical data are obtained through quantum measurements. 

\subsection{Mixed State and Quantum Super-operator.} Not all quantum operators are unitary, such as the aforementioned quantum measurement, quantum noise, and qubit resetting. These more general quantum evolution operators for non-closed systems (called quantum super-operators) are part of the complete positive trace-preserving operator (CPTP). Quantum states under the influence of these operators may become mixed states. A mixed state can be mathematically expressed as an ensemble of pure states. A CPTP operator can be expressed in \textit{Kraus operator-sum form}. For example, a mixed state $\rho$ on $n$ qubits can be illustrated by an ensemble of pure states $\{p_i,|\psi\rangle\}$, written in the density matrix form $\rho = \sum_ip_i|\psi_i\rangle\langle\psi_i|$. A quantum measurement without any result information acting on the state can be written in Kraus operator: $\mathcal{E}(\rho) = \sum_i M_i\rho M_i^\dagger$, where $\sum_i E_i^\dagger E_i = I_{2^n}$. In the implementation of QisMC, CFLOBDD~\cite{cflobdd} is used to symbolically represent density matrices and super-operators by extracting support vectors and Kraus operators.

The quantum circuits and quantum programs involved in this work, that is, the objects verified by QisMC, refer to a \texttt{QuantumCircuit} object in Qiskit. After Qiskit version 1.0, the \texttt{QuantumCircuit} object has supported classical feedforward and control flow such as \texttt{if\_test}, \texttt{switch} and \texttt{while\_loop}. This provides us with a structured control flow for dynamic quantum circuits and supports complex dynamic quantum circuits and sequential quantum circuit algorithms such as quantum random walk~\cite{sequential}. Since the core of QisMC's model checking algorithm is based on the quantum-classical transition system introduced in Section~\ref{sec3} and is written in C++, future changes to the Qiskit API will only bring updates to the CFG interface layer, which is a reflection of QisMC's scalability.

\section{Proof of Theorem}
\subsection{Proof of Proposition~\ref{prop:wp}}\label{proof-4.1}
\begin{proof}
    The proof consists of three parts: (1) Algorithm~\ref{alg:wp} always terminates; (2) Soundness: The output $wp$ of Algorithm~\ref{alg:wp} satisfies $\Upsilon(wp)=wp$, where $\Upsilon$ is defined as~\ref{upsilon_wp}; (3) Completeness: For any fixed-point $Z$ of~\ref{upsilon_wp} satisfying $\Upsilon(Z)=Z$, $Z\sqsubseteq wp$, where $\sqsubseteq$ is a pointwise extension of the inclusion order $\subseteq$.
    \begin{enumerate}
        \item Since each $wp[i]$ is a subspace of the finite-dimensional Hilbert space $\mathcal{H}$, the partial order $(\mathcal{P}(\mathcal{H}), \subseteq)$ has finite height.
        In every update step, some component $wp[i]$ is replaced by a strictly smaller subspace $wp[i]\land\mathcal{E}_{ix}^{-1}(wp[x])$, hence the vector $wp$ forms a strictly descending chain with respect to the pointwise order $\sqsubseteq$.
        Because any strictly descending chain of subspaces must terminate in at most $\dim(\mathcal{H})$ steps, Algorithm~\ref{alg:wp} terminates after finitely many updates.
        \item Note that $wp$ is initialized with $\bigwedge \tilde{f}(l_i)$. Then when the algorithm terminates, there is no component can be further reduced, which means $wp[i]=wp[i]\land \mathcal{E}_{ix}^{-1}(wp[x])\land \bigwedge \tilde{f}(l_i)$ for any $i,x$, equally say, $wp[i]=wp[i]\land \bigwedge_{x\in succ(i)}\mathcal{E}_{ix}^{-1}(wp[x])\land \bigwedge \tilde{f}(l_i)$ for any $i$, i.e. $wp=\Upsilon(wp)$.
        \item Denote $wp$ after $k$-th iteration of the algorithm as $wp^k$. Let $Z$ be any fixed point of $\Upsilon$. By initialization, $wp=\bigwedge \tilde{f}(l_i)$ is the largest subspace consistent with the local constraints, thus $Z\sqsubseteq wp^0$. The update rule preserves the invariant $Z\sqsubseteq wp^k$ for all iterations $k$ since $\Upsilon$ is monotone under the pointwise inclusion order. Therefore, when the iteration terminates at $wp^\ast$, we have $Z\sqsubseteq wp^\ast$, showing that $wp^\ast$ is the greatest fixed point of $\Upsilon$.
    \end{enumerate}
    Proof completes.
\end{proof}

\subsection{Proof of Theorem~\ref{thm:reduction}}\label{proof-4.3}
\begin{proof}
    We prove $\mathcal{M}\models \Phi \Leftrightarrow TS[\Phi,\mathcal{M}]\models\varphi[\Phi,\mathcal{M}]$ by structural induction on the syntax of the qCTL state formula $\Phi$.
    \begin{enumerate}
        \item For the base case $\Phi=c$, the construction of $g(l)$ preserves all classical atomic propositions, thus the proposition is true;
        \item For the base case $\Phi=\psi$, the labelling function $f'(l)$ ensures that $l\models \psi$ in $\mathcal{M}$ iff $l\models\omega_\psi$ in $TS$;
        \item For Boolean connectives $\cneg, \cand$, since the mapping $\Phi\mapsto\varphi$ preserves Boolean connectives and the claim holds for sub-formulas, the equivalence follows immediately.
        \item For temporal operator formulas $\vartriangle O\Phi'$ and $\vartriangle \Phi_1 U \Phi_2$, where $\vartriangle\in \{\exists, \forall\}$, note that $TS$ inherits the same set of locations and the same transition graph (domain of $\mathcal T$) as $\mathcal M$. Hence there is a one-to-one correspondence between paths of $\mathcal M$ and paths of $TS$ that preserves location sequences. By the inductive hypothesis, at every location along any path the state-formula holds in $\mathcal M$ iff the reduced state-formula holds in $TS$; therefore the path formulae and path quantifiers evaluate identically in $\mathcal M$ and $TS$.
    \end{enumerate}
    Thus the equivalence holds for $\Phi$, completing the induction.


\end{proof}

\subsection{Proof of Proposition~\ref{thm:conj}}\label{proof-intersection}
\begin{proof}
For the convenience of narration, we denote $A$ for $V_1$ and $B$ for $V_2$. Let $P_B$ and $P_A$ denote the orthogonal projection operators onto $B$ and $A$, respectively, and define the linear map $T:A\to\mathcal{H}$ by $T(a)=(I-P_B)a$. Let $W:=T(A)=(I-P_B)(A)$ be the spanned space of $T(a)$ for all $a\in A$.

\emph{(1) Prove that $W$ is the spanned space of output Line 2. }
Note that $T(a)$ is exactly the Gram-Schmidt procedure of getting a new orthogonal vector from $a$, given the previous orthogonalized $B$. When Gram-Schmidt is applied to the ordered list $(V_2,V_1)$, it first orthonormalizes $V_2$, then produces additional orthonormal vectors that span the subspace of $A\lor B$ orthogonal to $B$, i.e., precisely $W$. Thus the set obtained by $\textsc{GramSchmidt}(V_2\cup V_1)\setminus V_2$ spans $W$.

\emph{(2) Get the representation of the final result $S$. }
After projecting each basis vector of $W$ back to $A$ and orthonormalizing, the algorithm obtains a subspace $D\subseteq A$ with $D=\mathrm{span}(P_A(W))$. The final step computes an orthonormal basis of the subspace
\[
S = A \cap D^\perp = A \cap \big(\mathrm{span}(P_A(W))\big)^\perp.
\]
Hence $\mathrm{span}(S)=A\cap \big(\mathrm{span}(P_A(W))\big)^\perp$.

\emph{(3) Completeness: $A\cap B \subseteq \mathrm{span}(S)$.}
Take any $x\in A\cap B$. For every $w\in W$ we have $w\perp B$, hence $\langle w,x\rangle=0$. Therefore also $\langle P_A(w),x\rangle=\langle w,x\rangle=0$, so $x\in \big(\mathrm{span}(P_A(W))\big)^\perp$. Since $x\in A$ we get $x\in A\cap \big(\mathrm{span}(P_A(W))\big)^\perp=\mathrm{span}(S)$. Thus $A\cap B\subseteq\mathrm{span}(S)$.

\emph{(4) Soundness: $\mathrm{span}(S)\subseteq A\cap B$.}
Let $y\in\mathrm{span}(S)$. Then $y\in A$ and $\langle P_A(w),y\rangle=0$ for all $w\in W$. But for any $w\in W$ there exists $a\in A$ with $w=(I-P_B)a$, hence
\[
0=\langle P_A(w),y\rangle=\langle w,y\rangle=\langle (I-P_B)a,y\rangle=\langle a,y\rangle-\langle P_B(a),y\rangle.
\]
Setting $a=y$ (which is allowed because $y\in A$) yields
\[
\|y\|^2=\langle y,y\rangle=\langle P_B(y),y\rangle=\|P_B(y)\|^2.
\]
Because $\|y\|^2=\|P_B(y)\|^2+\|(I-P_B)(y)\|^2$, the equality $\|y\|^2=\|P_B(y)\|^2$ implies $\|(I-P_B)(y)\|=0$, hence $(I-P_B)(y)=0$ and $y\in B$. Consequently $y\in A\cap B$ and thus $\mathrm{span}(S)\subseteq A\cap B$.

From (3) and (4), $\mathrm{span}(S)=A\cap B$ is proven.
\end{proof}

\end{document}